\DeclareMathAlphabet{\pazocal}{OMS}{zplm}{m}{n}
\let\oldReturn\Return
\renewcommand{\Return}{\State\oldReturn}
\pgfplotsset{compat=1.5}
\newtheorem{theorem}{Theorem}[section]
\newtheorem{lemma}[theorem]{Lemma}
\newtheorem{defn}[theorem]{Definition}
\newtheorem{cor}[theorem]{Corollary}
\newtheorem{rem}[theorem]{Remark}
\newtheorem{prop}[theorem]{Proposition}
\newcommand{\A}{\pazocal{A}}
\renewcommand{\S}{\pazocal{S}}
\newcommand{\overbar}[1]{\mkern 1.5mu\overline{\mkern-1.5mu#1\mkern-1.5mu}\mkern 1.5mu}
\renewcommand{\P}{\pazocal{P}}
\newcommand{\sT}{\scriptscriptstyle{T}}
\newcommand{\sR}{\scriptscriptstyle{R}}
\newcommand{\sFP}{\scriptscriptstyle{FP}}
\newcommand{\C}{\pazocal{C}}
\newcommand{\D}{\pazocal{D}}
\newcommand{\R}{\mathbb{R}}
\newcommand{\Y}{\pazocal{Y}}
\newcommand{\Z}{\pazocal{Z}}
\newcommand{\bS}{{\bf S}}
\newcommand{\bhS}{{\bf S}}
\newcommand{\sA}{{\scriptscriptstyle{A}}}
\newcommand{\sD}{{\scriptscriptstyle{D}}}
\newcommand{\sB}{{\scriptscriptstyle{B}}}
\newcommand{\G}{{\pazocal{G}}}
\newcommand{\F}{\pazocal{F}}
\numberwithin{theorem}{section}
\def \cN{{\cal N}}
\def \*{\star}
\def \10n{\!\!\!\!\!\!\!\!\!\!}
\def \sF {{\scriptscriptstyle F}}
         \DeclareMathAlphabet{\mathscr}{U}{BOONDOX-cal}{m}{n}
         \SetMathAlphabet{\mathscr}{bold}{U}{BOONDOX-cal}{b}{n}
         \DeclareMathAlphabet{\mathbscr} {U}{BOONDOX-cal}{b}{n}
\newcommand{\boundellipse}[3]
{(#1) ellipse (#2 and #3)
}
\begin{document}
\title{\LARGE \bf Dynamic Information Flow Tracking for  Detection of Advanced Persistent Threats: A Stochastic Game Approach}
\author{Shana Moothedath, Dinuka Sahabandu,  Joey Allen, Andrew Clark, \IEEEmembership{~Member, IEEE,}\\ Linda Bushnell,\IEEEmembership{~Fellow, IEEE,} Wenke Lee,\IEEEmembership{~Fellow, IEEE} and Radha Poovendran,\IEEEmembership{~Fellow, IEEE}
    \thanks{S. Moothedath, D. Sahabandu, L. Bushnell, and R. Poovendran are with the Department of Electrical and Computer Engineering, University of Washington, Seattle, WA 98195, USA. \texttt{\{sm15, sdinuka, lb2, rp3\}@uw.edu}.}
    \thanks{J. Allen and W. Lee are with the College of Computing, Georgia Institute of Technology, Atlanta, GA 30332 USA.
    \texttt{jallen309@gatech.edu, wenke@cc.gatech.edu}.}
      	\thanks{A. Clark is with the Department of Electrical and Computer Engineering, Worcester Polytechnic Institute, Worcester, MA 01609 USA. \texttt{aclark@wpi.edu}.}
    }

\maketitle
\thispagestyle{empty}
\pagestyle{empty}

\begin{abstract}
Advanced Persistent Threats (APTs) are  stealthy   attacks by intelligent adversaries. This paper studies the detection of APTs that infiltrate cyber systems and compromise specifically targeted data and/or infrastructures. Dynamic information flow tracking is an information trace-based detection mechanism against APTs that tags suspicious information flows in the system and performs security analysis for  unauthorized use of tagged data. In this paper, we develop an analytical model for resource-efficient  detection of APTs using an information flow tracking game. The game  is a nonzero-sum, turn-based, stochastic game with asymmetric information as the defender cannot distinguish whether an incoming flow is malicious or benign. The payoff functions of the game capture the cost for performing security analysis and the rewards and penalties received by the players.
We analyze equilibrium of the  game and prove that a Nash equilibrium is given by a solution to the  minimum capacity cut set problem on a flow-network derived from the system. The edge capacities of the flow-network are obtained from the cost of performing security analysis. 
Finally, we implement our algorithm on a  real-world dataset for a data exfiltration attack augmented with false-negative and false-positive rates and compute an optimal defender strategy.
\end{abstract}

\begin{IEEEkeywords}
Advanced Persistent Threats (APTs), Information flow tracking, Stochastic games, Minimum-cut problem
\end{IEEEkeywords}
\IEEEpeerreviewmaketitle
\section{Introduction}\label{sec:intro}
An advanced persistent threat (APT) is a prolonged and targeted cyber attack in which an intruder gains illicit access to a system  and remains undetected for an extended period of time. The intention of an APT  is  to monitor system activity and continuously mine highly sensitive data rather than causing damage to the system or organization. 
APT attacks consist of multiple stages that are  initiated by  an initial compromise and reconnaissance stage to establish a foothold in the system. Attackers then progress through the system, exploring and planning an attack strategy to obtain the desired data. This is followed by  exfiltration of sensitive data, which  is continued over a long period of time. 
Defending against APTs is a challenging task since APTs are specifically designed to evade conventional security mechanisms such as firewalls, anti-virus software and intrusion-detection systems that rely on signatures and can, therefore, guard only against known threats. However, APTs introduce information flows in the form of data-flow commands and control-flow commands while interacting with the system and these are continuously recorded in the  log file of the system.

 After an APT attack, when the consequences of the attack has been identified, a forensic investigation is  conducted using the data from the log files. The purpose of this postmortem investigation is to understand the  incident’s root cause, and construct  appropriate defense strategies against such APTs  and its variants  \cite{JiLeeDowWanFazKimOrsLee-17}. During the forensic analysis, the system log data is  analyzed in an offline setting.

Dynamic Information Flow Tracking (DIFT) \cite{NewSon-05} is a widely used detection mechanism for offline analysis of APT. DIFT  uses the information traces recorded in the  system log for performing the security analysis \cite{JiLeeDowWanFazKimOrsLee-17}. The key idea behind DIFT is that it tags all suspicious input/data channels and tracks the propagation of the  tagged  information flows through the system. DIFT generates security analysis using a pre-specified set of heuristic security rules  whenever it observes an unauthorized use of tagged data. These heuristic rules are typically defined by practicing experts based on the historical attack knowledge
 and  knowledge about  nominal system behavior \cite{JiLeeDowWanFazKimOrsLee-17}.

Although,  security rules incorporated in the DIFT mechanism cover a wide range of attacks, these security rules may not be capable of verifying the authenticity of information flows against all possible attacks,  resulting in the generation of false-negatives and false-positives.  Incorrectly identifying a benign (nominal user) flow  as malicious is a false-positive $(FP)$   and failing to identify a malicious (adversarial) flow is a  false-negative $(FN)$.   For instance, while the security rules for buffer overflow protection \cite{DalKanKoz-08} can be verified accurately, the security rules for web application vulnerabilities \cite{DalKozZel-09} cannot be accurately  verified.  Consequently,  attacks that exploit web application  may lead to incorrect conclusions by DIFT thereby generating $FP$s and $FN$s.

Additionally, limited availability of resources for defense along with the performance and memory overhead imposed by the defense mechanism on the system demands a resource-efficient detection technique. An analytical model of DIFT  that captures the cost for performing security analysis and the  effectiveness of flow-tagging mechanisms will facilitate the trade-off between the effectiveness of defense and the resource efficiency. Further, such a model would enable the design of optimal security strategies while taking into account the generation of $FP$ and $FN$.

In this paper, we provide  an analytical model to enable DIFT to optimally select locations in the system to perform security analysis so as to maximize the probability of detection  while minimizing the cost of detection. 
Our framework is based on the following insights. First, although both APT and DIFT are unaware of the other player's strategy, the effectiveness of the DIFT depends on the APT's strategy and the APT's probability of evading detection depends on the DIFT's strategy. This strategic interaction motivates a game-theoretic approach. Second, the  efficiency of detection also depends on the effectiveness of performing security analysis at different locations in the system, and hence is determined by  rates of $FP$ and $FN$. Third, the game unfolds at multiple states between the entry points  and the exit points of the attack where each state corresponds to the position of the tagged  flows in the system. At each state, the defender decides  whether to analyze a tagged flow and which one of the tagged flows to analyze (to avoid generation of $FP$ and $FN$) while the adversary decides which process to transition to, at the cost of spending the defense resources.  We formulate a stochastic game model that is played on an {\em information flow graph} (IFG). An IFG is a directed graph that expresses the history of a
system's execution in terms of the spatio-temporal relationships between
processes and subjects.
The contributions of this paper are the following.
\begin{enumerate}
\item[$\bullet$] We model the interaction of  APT and  DIFT  as a nonzero-sum, two-player, turn-based stochastic game (${\bf G}$) with finite state and action spaces. In the APT vs. DIFT game, the state of the game is the nodes of the IFG at which the tagged flows arrive. The location of the tagged flows is known to both players (APT and DIFT), and this makes the state of the game observable. However, each player is unaware of the other player’s strategy. Also, before performing security analysis, DIFT cannot distinguish malicious and benign flows. This results in an asymmetric information structure. 
\item[$\bullet$] We analyze Nash equilibrium (NE) strategy  of the  game and show that an NE  can be obtained from a solution to the {\em minimum capacity cut-set} problem on a flow-network constructed from the  IFG of the  system.
\item[$\bullet$] We implement our algorithms  and results on real-world data obtained for a data exfiltration attack using the Refinable Attack INvestigation (RAIN) system \cite{JiLeeDowWanFazKimOrsLee-17}  augmented with $FN$ and $FP$ rates of  the system.
\end{enumerate}

The rest of the paper is organized as follows. Section~\ref{sec:rel} presents  related work. 
Section~\ref{sec:prelim} introduces  preliminary concepts on information flow graph,  APT attack, and a description of DIFT-based defense mechanism. Section~\ref{sec:prob} describes the formulation of the turn-based stochastic game model. 
Section~\ref{sec:comp} gives the solution concept of the game and the equilibrium analysis results for computing optimal strategies of the players. Section~\ref{sec:sim} illustrates the numerical results using data exfiltration attack data set collected using RAIN and augmented with $FN$ and $FP$ rates. Section~\ref{sec:conclu} concludes the paper.
\section{Related Work}\label{sec:rel}
Stochastic games model  interactions of multiple agents that jointly control the evolution of states of a stochastic dynamical system \cite{Sha-53}. 
Stochastic games have been widely used  to model security games  \cite{LyeWin-05},  economic games \cite{Ami-03}, and resilience of cyber-physical systems \cite{ZhuBas-11}, where  each player tries to maximize its individual payoff. A brief overview of the existing results in stochastic games is given in \cite{JasNow-16}.

Stochastic games have been used  to address system security related problems. The interaction between malicious
attackers  and the intrusion detection system (IDS) is modeled using a stochastic game in  \cite{AlpBas-06}.  A  nonzero-sum stochastic game model is given  in \cite{ZhuTemBas-10} to model network security configuration problem in distributed IDS. 
Then a value iteration based algorithm is proposed in \cite{ZhuTemBas-10} to find an $\epsilon$-NE  for an attacker model where multiple adversaries simultaneously attack a network. 
A zero-sum stochastic game is formulated in \cite{NguAlpBas-09} for IDS in a communication or computer network with interdependent nodes and correlated security assets and vulnerabilities. 
Note that,  \cite{HesPra-01} considered a finite-horizon game and \cite{AlpBas-06}-\cite{NguAlpBas-09} dealt with zero-sum stochastic games for IDS.

Game-theoretic models for resource-efficient detection of APTs are given in \cite{SahXiaClaLeePoo-18}, \cite{MooSahClaLeePoo-18}, \cite{MooShaAllVClaBushWenPoo-18_arx}, and \cite{dinukaCDC}. DIFT models with fixed trapping nodes  are introduced and analyzed in \cite{SahXiaClaLeePoo-18},  \cite{MooSahClaLeePoo-18}. Paper  \cite{MooShaAllVClaBushWenPoo-18_arx} extended the models in \cite{SahXiaClaLeePoo-18}, \cite{MooSahClaLeePoo-18} by considering a DIFT model which selects the trapping nodes in a dynamic manner rather than being fixed and proposed a min-cut based solution approach. Reference \cite{dinukaCDC} considered the detection of APTs when the attack consists of multiple attackers  possibly with different capabilities and analyzed the best responses of the players.   Note that, the focus in \cite{SahXiaClaLeePoo-18}, \cite{MooSahClaLeePoo-18}, and \cite{MooShaAllVClaBushWenPoo-18_arx} is resource-efficient detection of APTs and they do not consider the false-positives and false-negatives generated in the system. In other words, the game models in \cite{SahXiaClaLeePoo-18}-\cite{dinukaCDC}  assume that security analysis performed by DIFT can verify the authenticity of  tagged information flows accurately. 

A stochastic game model for detecting APTs was recently introduced in the conference version of this work \cite{Dinuka_ACC-19}.  The proposed method in \cite{Dinuka_ACC-19} analyzed a discounted stochastic game and presented a value iteration based algorithm to obtain an $\epsilon$-NE of the discounted game. Due to the discounted nature of the payoff, the Nash equilibrium analysis of the game in \cite{Dinuka_ACC-19} reduces to a nonlinear program (NLP). Solving an NLP is computationally challenging and the convergence of the algorithm in \cite{Dinuka_ACC-19} is not guaranteed. This necessitates an alternate solution approach for solving the DIFT vs. APT game. In this paper, we  solve an average reward (undiscounted) stochastic game and present a min-cut based approach with guaranteed convergence to compute optimal defense strategies.

In this paper we consider that the trapping nodes are generated dynamically and we extend our previous results on min-cut analysis in \cite{MooShaAllVClaBushWenPoo-18_arx} in the following aspects. (i)~Since there can be a wide range of possible APT attacks, we recognize that the security rules of  DIFT may not be  capable of verifying the authenticity of the information flows accurately thereby resulting in the generation of $FP$ and $FN$. Consequently,  the game model in this paper is stochastic unlike  in \cite{MooShaAllVClaBushWenPoo-18_arx}. (ii)~We introduce a model with multiple information flows out of which one is malicious and the remaining are benign unlike in  \cite{MooShaAllVClaBushWenPoo-18_arx}  which dealt with one information flow. Note that, while DIFT knows there is exactly one malicious flow in the system, it is unaware which flow is the malicious flow before performing security analysis. This results in an asymmetric information between players and hence the game is imperfect information game.  (iii)~In the current game model, although an information flow is concluded as benign at an initial inspection, it can be inspected again later during its propagation.  The model in \cite{MooShaAllVClaBushWenPoo-18_arx} inspects the flow exactly once as $FN$ and $FP$ rates are assumed to be zero. However, DIFT may analyze a flow  multiple times and this will result in added resource cost. The game model considered in this paper captures the added resource cost incurred from reevaluation of the flows.

\section{Preliminaries}\label{sec:prelim}
In this section, we first describe the graphical representation of  the  system, denoted as the information flow graph, and then present the  models of the attacker and the flow tracking-based defense.
\subsection{Information Flow Graph (IFG)}\label{subsec:provenance}
Let $\G = (V_{\G}, E_{\G})$ represent the IFG of the system.  $V_{\G} =\{v_1, \ldots, v_N\}$ consists of the processes (e.g., an instance of a computer program), files,  and objects in the system and $E_{\G} \subseteq V_{\G} \times V_{\G}$ represents the information flows (directed) in the system from one node to the other.   
 IFG-based auditing is heavily desired by large enterprises and government agencies to detect APTs. We perform our game-theoretic analysis  on the IFG of the system and use DIFT as the defense mechanism to detect APTs.
 
 \subsection{Attack Model: Advanced Persistent Threats (APTs)} 
APTs are intelligent attacks with specific targets.  APTs  are stealthy attacks that perform data exfiltration at an ultra-low-rate to avoid detection.  Unlike classical malware, APT campaigns tend to involve multiple hosts, multiple systems, and extend over a long period of time, up to several months~\cite{stuxnet}. APTs are characterized by their abilities to
render existing security mechanisms ineffective.  APTs can evade security protection because existing mechanisms lack  sufficient visibility into user, program, and operating system activities to ascertain  the authenticity of an activity and the provenance of its data. 
%
The timeline and key stages of the APT lifecycle is described below and  presented in Figure~\ref{fig:timeline}.

\begin{enumerate}

\item[1.]  Reconnaissance:  During the reconnaissance phase, the attacker 
   will try to gain information about the system, such as what nodes are
   accessible on the system and the security defenses that
    are being used. 

\item[2.] Initial Compromise: During the initial compromise stage, the attacker's
   goal is to gain access to an enterprise's network. In most cases, the
   attacker achieves this by exploiting a vulnerability or a social engineering 
   trick, such as a phishing email.

\item[3.] Foothold Establishment: Once the attacker has completed the initial
   compromise, it will establish a persistent presence by opening up a
   communication channel with their Command \& Control (C\&C) server. 

\item[4.] Lateral Movement:  The attacker will increase its control of the system
   by moving laterally to new nodes in the system. 

\item[5.] Target Attainment:  Next, the attacker will try and escalate its
   privileges which may be necessary in order to access sensitive
   information, such as a proprietary source code or customer information.

\item[6.]  Attack Completion: The final goal of the attacker is to deconstruct the attack, hopefully in a way to minimize its footprint in order to evade detection. For example, attackers may rely on deleting the system's log.

\end{enumerate}

\begin{figure}[h]
	\centering
	{\includegraphics[scale=0.7, width=0.5\textwidth]{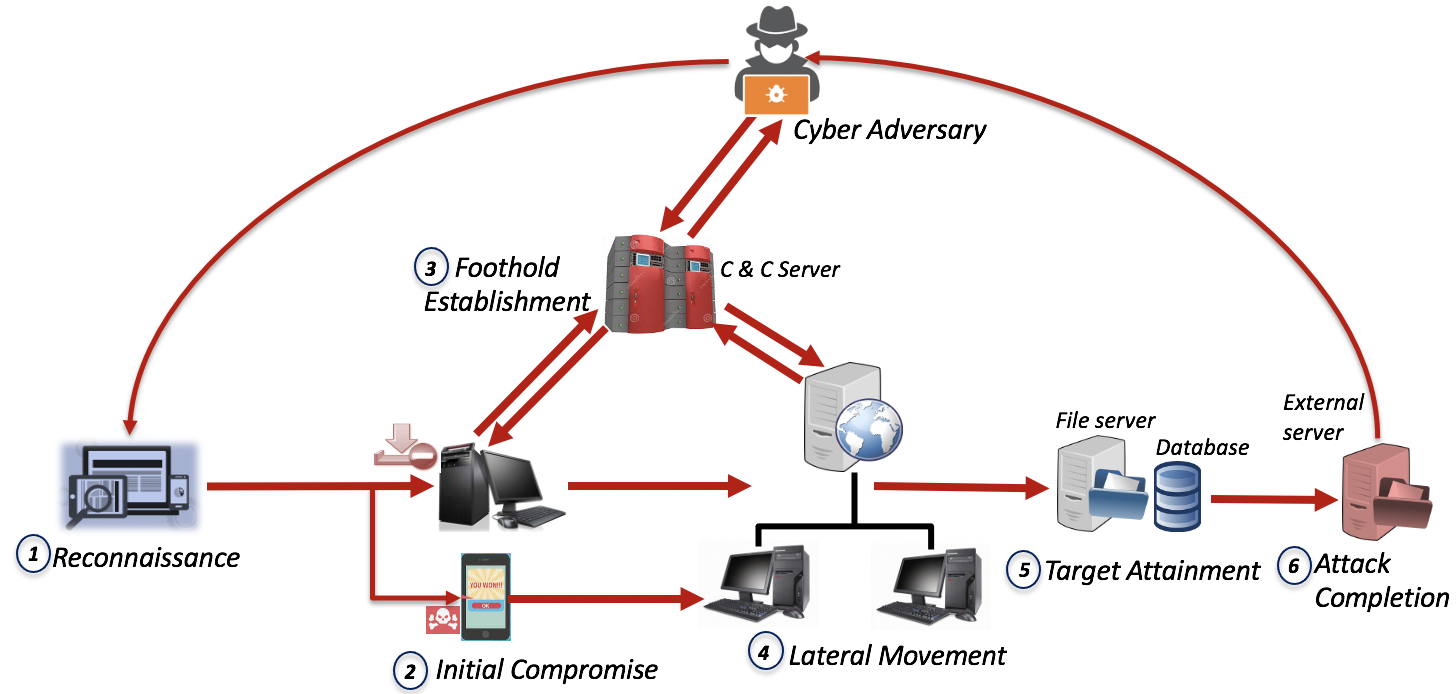}}
	\hspace{\parindent}
	\vspace*{-3 mm}
	\caption{\small Schematic diagram of an APT attack timeline.}\label{fig:timeline}
\end{figure} 

                      
Let the set of possible entry points of the APT be denoted as $\lambda \subset V_{\G}$.  During an attack, the goal of the APT is to capture a subset of nodes of the IFG  referred to as {\em destinations} and denoted as $\D \subset V_{\G}$. Here, $\lambda \cap \D = \emptyset$. Once a foothold is established in the system, APT tries to elevate the privileges and proceeds to the destinations through more internal compromises and   performs data exfiltration at an ultra-low-rate.  In order to achieve this the APT performs operations in the system to transition through the nodes in $V_{\G}$ and arrive at some node in $\D$. In other words, the adversary (APT) selects paths in the IFG and performs transitions along the paths to reach the set $\D$ from the set $\lambda$ .

The  adversary   in the DIFT vs. APT game has the following properties.  We consider an APT attack  that generates a single malicious information flow.  
 The malicious flow  originates at an entry point of the attack and  the objective of the flow is to reach a destination. The state of the game is the set of nodes of the information flow graph at which the tagged flows arrive. APT observes the tagged flows  and hence the state of the game is observable to APT. However, APT is unaware of the actions and the strategy of DIFT. Thus the information structure of APT is asymmetric with respect to that of the defender.

\subsection{Defender Model: Dynamic Information Flow Tracking (DIFT)} 
The objective of the defender is to prevent  any adversarial information flow traversing to the destination nodes. 
DIFT is a dynamic taint analysis based detection mechanism that consists of three main components: (i)~tag sources, (ii)~tag propagation rules, and (iii)~trapping nodes. Trapping nodes are processes and objects (files and network endpoints) in the system that are considered as  suspicious  sources of information by the system. All data originating from a tag source is labeled or tagged and DIFT tracks the propagation of a tagged data through the system.   Propagation of a tagged information flow through the system results in tagging of more information flows based on the  tag propagation rules specified by the DIFT \cite{SuhLeeZhaDev-04}.  
Tag propagation rules are defined  based on two kinds of information flows in the system: explicit information flows and implicit information flows  \cite{ClaLiOrs:07}. 

During the execution of a program in the system, DIFT keeps track of the tagged information flows and generates  trapping nodes for any  unauthorized use of tagged data that indicate a possible attack.  DIFT  invokes security analysis at the trapping nodes using the specified security rules to verify the authenticity of the tagged flow thereby concluding whether there is an attack or not. These security rules  are pre-specified depending on the application running on the system \cite{DalKanKoz-08}. Note that tagged (suspicious) flows  consist of both benign and malicious flows. As the specified security rules may  not necessarily cover all possible attacks by APTs, DIFT generates false-positives and false-negatives.  An optimal selection of trapping nodes in the IFG is hence critical to detect  a wide range of attacks in the system with minimum false-positives and false-negatives. Note that while tag sources and tag propagation rules are known to the attacker, the trapping nodes are dynamically generated during the operation of the system and hence is unknown to the attacker. Figure~\ref{fig:schematic} shows a schematic diagram of the DIFT-based detection framework considered in this paper.

\begin{figure}[h]
	\centering
	{\includegraphics[width=0.5\textwidth]{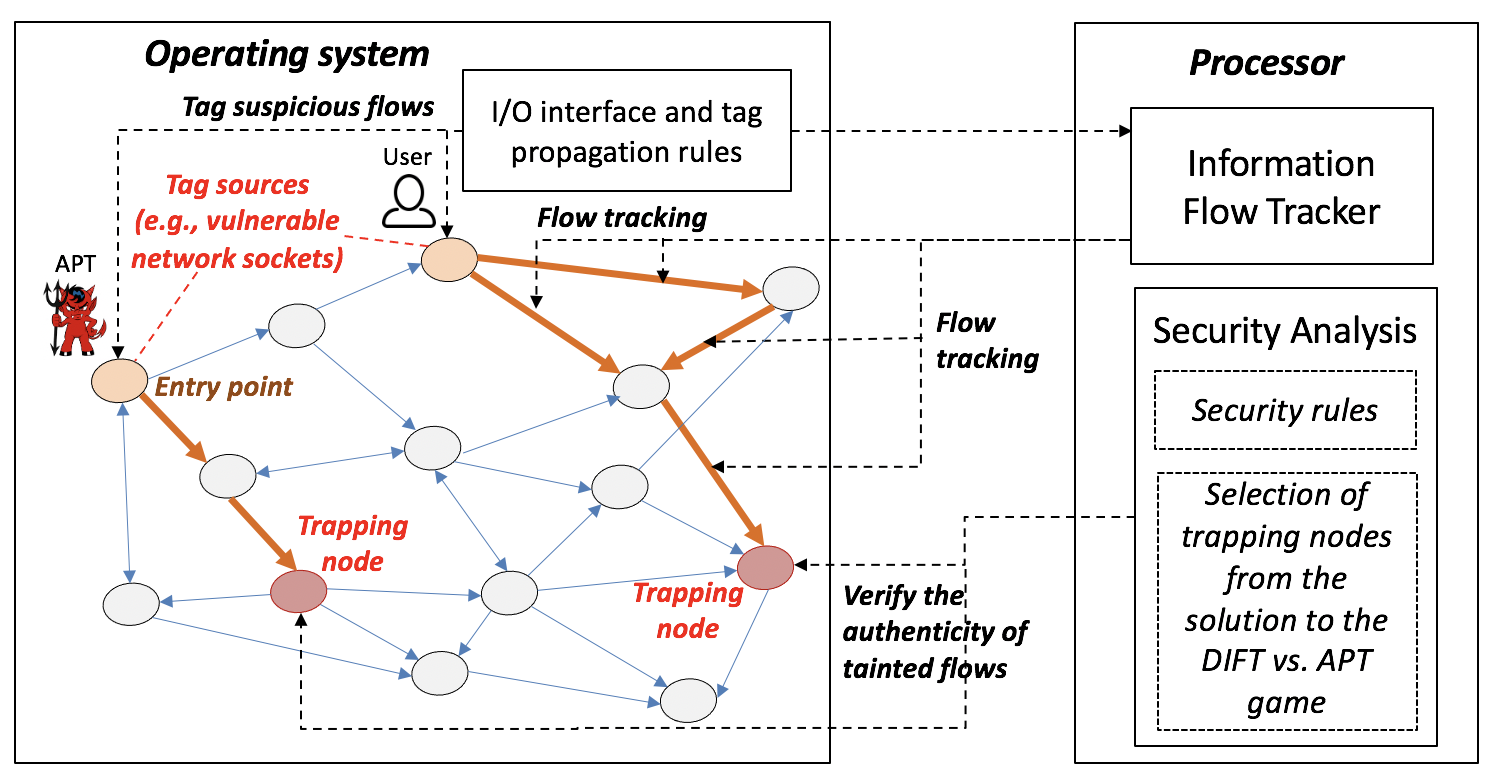}}
	\hspace{\parindent}
	\vspace*{-3 mm}
	\caption{\small Schematic diagram of the DIFT detection framework. The nodes in the figure denote system components such as processes, files, and subjects.}\label{fig:schematic}
\end{figure} 

The  defender   in the DIFT vs. APT game has the following properties.  The state of the game, which is the nodes of the information flow graph at which the tagged flows arrive, is known to DIFT. Thus DIFT observes the state of the game. However, DIFT cannot distinguish a malicious and a benign flow before performing security analysis. Also, while DIFT knows that there is an APT in the system, DIFT is unaware of the actions and the strategy of APT.  

\section{Turn-Based APT vs. DIFT Game  ${\bf G}$}\label{sec:prob}
In this section, we model the interaction of the APT with the system during the different stages of the attack as  a  dynamic stochastic game between the defender (DIFT) and the adversary (APT). Let $\P_{\sD}$ be the defender player and  $\P_{\sA}$ be the adversarial player. We consider a nonzero-sum turn-based game. The information structure of the players is asymmetric as the defender player does not know if a tagged (suspicious) flow is malicious or benign while the adversarial player knows this information. The game ${{\bf G}} = \{\bhS, s_0, {\Sigma},{ \Delta}, {P}\}$ unfolds on a finite state space, $\bhS$, with  initial state, $s_0$, finite  action space of players, ${\Sigma}$,   labeled transitions ${\Delta} \subseteq \bhS \times {\Sigma} \times \bhS$, and transition probability matrix $P$. 

\subsection{State Space}\label{subsec:statespace}

We consider a turn-based game with  the {\em state} of the game at  time $t  \in \{0, 1, 2,\ldots \}$ denoted by the random variable ${s}_t$.  Let $T$ denote the time horizon of the game, i.e., the game ends at time $T$ and $s_{t'} = s_T$, for all $t'\geqslant T$.   The state space ${\bf S}$ is partitioned into two sets, ${\bf S}_{\sA}$ and ${\bf S}_{\sD}$, such that ${\bf S}_{\sA}\cap {\bf S}_{\sD} = \emptyset$ and ${\bf S}_{\sA} \cup {\bf S}_{\sD} = {\bf S}$. The states that belong to the set ${\bf S}_{\sA}$ are referred to as the {\em adversary-controlled} states and the states in ${\bf S}_{\sD}$ are referred to as the {\em defender-controlled} states. Specifically, ${\bf S}_{x}$ is the subset of states at which player $\P_x$, where $x \in \{A, D\}$, controls the transitions.  The state of the game at time $t$  is the state of the system which corresponds to the position of the tagged (suspicious) flows at time $t$.  
 Let $W$ be the number of tagged flows that arrive into the system at time $t=1$. In practice (from the analysis of log data), only a small fraction of nodes in the system receive a system call at the same time. Hence in our model we assume that $W<<N$.  We denote the state of the game at time $t$ using the location of the $W$ flows in the system and  one additional bit which represents whether that state belong to ${\bf S}_{\sA}$ or ${\bf S}_{\sD}$. 

Specifically, at time $t$, $s_t = (x, v_{i_1},  \ldots, v_{i_W})$, where $\{v_{i_1}, \ldots, v_{i_W}\} \in V_{\G} \cup \{ \phi, \tau\}$ and $x \in \{A, D\}$.  Here, $x=A$ implies that $s_t \in {\bf S}_{\sA}$ and $x=D$ implies that $s_t \in {\bf S}_{\sD}$. Moreover, $v_{i_k}=\phi$ if the $k^{\rm th}$ information flow  {\em drops} out, and $v_{i_k}=\tau$ if the $k^{\rm th}$ information flow is  {\em trapped} by DIFT.
  At time $t \geqslant 1$, the state of the game is $s_t = (x, v_{i_1},  \ldots, v_{i_W})$, where $x \in \{A,D\}$ and $\{v_{i_1}, \ldots, v_{i_W}\} \in V_{\G} \cup \{ \phi, \tau\}$. Note that,  out of the $W$ flows one is a malicious flow and the remaining $(W-1)$ are benign flows. For notational convenience we consider the first flow is malicious (note that, defender does not have this information).  While DIFT observes all the $W$ flows, it cannot distinguish  malicious and benign flows. APT, on the other hand, knows which flow is the malicious flow. Thus in the DIFT vs. APT game, APT and DIFT   have asymmetric information.

  Also, corresponding to a state $s_t$ if  $v_{i_k} =\phi$, for some $k \in \{1, \ldots, W\}$, then for all time $t' > t$ the state corresponding to the $k^{\rm th}$ flow remains $\phi$, i.e., if a flow is dropped at time $t$ it remains dropped through out the rest of the game.   

\subsection{Action Spaces}\label{subsec:actionspace}
 We first construct a directed flow-network $\F=(V_{\F}, E_{\F})$ from the IFG $\G$ by introducing a source node $s_{\sF}$ with an outgoing edge to all the entry points and a sink node $t_{\sF}$ with an incoming edge from all the destination nodes.  Here $V_{\F} = V_{\G} \cup \{s_{\sF}, t_{\sF}\}$ and $E_{\F} = E_{\G} \cup \{s_{\sF}\times \lambda\} \cup \{\D \times t_{\sF}\}$. 
\begin{defn}\label{def:attack_path}
 An attack path in the flow-network $\F$ is a simple directed path\footnote{A directed path is said to be a simple directed path if there are no cycles or loops in the path.}  from $s_{\sF}$ to $t_{\sF}$. The set of attack paths in  $\F$ is denoted as $\Omega_{\D}$.
\end{defn} 
The objective of the APT attack  is to capture a destination node in set $\D$. To achieve the same, APT chooses transitions along a path from the set $\lambda$ to the set $\D$ which translates into an attack path in the flow-network $\F$.  Since APT is a stealthy attack, APT performs minimum amount of activities in the system in order to avoid detection. Thus, there  are no cycles  in the transition path of the APT through the IFG. We also note that, any IFG with set of cycles can be converted into an acyclic IFG without losing any causal relationships between the components given in the original IFG. One such dependency preserving conversion is node versioning given in \cite{hossain2018dependence}. Hence  all $\omega \in \Omega_{\D}$ are simple directed paths.

Using the construction of the flow-network, we denote the initial state of the game at $t=0$ is  denoted as $s_0= (A, s_{\sF}, \ldots, s_{\sF})$, since at $t=0$ all flows originate at the source node $s_{\sF}$. At a state $s_t$, either  $\P_{\sD}$ or $\P_{\sA}$ chooses an action from their respective action sets denoted by $\A_{\sD}$ and $\A_{\sA}$, depending on whether $s_t \in {\bf S}_{\sD}$ or $s_t \in {\bf S}_{\sA}$.  If $s_t \in {\bf S}_{\sA}$, the adversary decides whether to quit the attack by dropping the information flow or to continue the attack. If the adversary decides not to quit the attack, then it selects which neighboring node of the IFG to transition from the current node so as to reach set $\D$. In other words, the adversary either drops the malicious flow or performs a transition along a path in $\Omega_{\D}$ and hence $\A_{\sA}:= \{ \phi \}  \cup V_{\G} $.  Specifically, at state $s_t =  (A, v_{i_1}, \ldots, v_{i_W})$, where $v_{i_1} \in V_{\G}$ and $v_{i_1} \notin \D$, $\A_{\sA}(s_t) \in \{ \phi\} \cup \cN(v_{i_1}) $, where $\cN(v_{i_1}) =\{v_j\in V_{\G}: (v_{i_1}, v_j)\in E_{\G}\}$. 
On the other hand,  if $s_t \in {\bf S}_{\sD}$, the action of the defender is to decide whether to perform security analysis on an  information flow or not. If the defender
chooses to perform security analysis, it also selects a flow to analyze.  A node at which security analysis is performed is referred to as a {\em trapping node}. Consider a state $s_t =  (D, v_{i_1}, \ldots, v_{i_W})$ and let defender decides to perform security analysis on the flow at $v_{i_1}$. Then $v_{i_1}$ is chosen as a trapping node.  Thus,  $\A_{\sD}:= \{0\} \cup V_{\G}$, where $0$ represents not analyzing any flow. The defender performs security analysis on at most one flow in a state as there is only malicious flow and also it is not possible to perform analysis on the information flows at the entry points of the attack, $\lambda$. Thus at state $s_t =  (D, v_{i_1}, \ldots, v_{i_W})$,    $\A_{\sD}(s_t) \in \{ 0\} \cup \{v_{i_k} :  v_{i_k} \notin \lambda, k \in \{1, \ldots, W\}\}$.   We also note that the information flow chosen by the DIFT to perform security analysis can be either the malicious flow or a benign flow.

 A state $s_t $ is said to be an {\em absorbing state} if the game terminates at $s_t$ and the action sets of the players are empty, i.e., $\A_{\sA}(s_t) = \A_{\sD}(s_t) = \emptyset$. In ${\bf G}$ a state $s_t = (x, v_{i_1}, \ldots, v_{i_W})$, where $x \in \{A, D \}$,  is absorbing if  one of the  three cases hold: 
\begin{itemize}
\item[(i)]~$v_{i_1} \in  \D \subset V_{\G}$,
\item[(ii)]~$v_{i_k} = \tau$ for some $k \in \{1, \ldots, W\}$, 
\item[(iii]~$v_{i_k} =\phi$ for all $k \in \{1, \ldots, W\}$. 
\end{itemize}
 
Case~(i) corresponds to adversary reaching a destination, case~(ii) corresponds to the case where DIFT performed security analysis on an information flow and also concluded it as malicious, i.e., a flow is trapped, and case~(iii) corresponds to all $W$ flows dropping out. 

\subsection{State Transitions}\label{subsec:tran}
\begin{defn}\label{def:terminate}
Time $t$ is said to be the termination time of the game ${\bf G}$ if it satisfies one of the following condition: (a)~$t=T$ and  (b)~the state of the game $s_t = (x, v_{i_1}, \ldots, v_{i_W})$, where $x \in \{A, D \}$, is an absorbing state.
\end{defn}

We use the notation $W$  to denote the number of tagged information flows that are recorded at the first instant of time, i.e., $t=1$. The log recording system (RAIN) used in our work uses the system call in \cite{log} to create a timestamp for each audit log.  At every instant of the log recording, RAIN records at most one system call for each system component (processes, files). As a result,  the $W$ information flows arrive at distinct nodes of the information flow graph as it is not possible for multiple flows to arrive at one node at the same instant of time.  Let the state of the game at $t=1$ be  $s_t = (D, v_{i_1},  \ldots, v_{i_W})$. Here $\{v_{i_1}, \ldots, v_{i_W} \}$ are the nodes of the IFG at which the flows arrive. The DIFT-based defense mechanism observes the flows. The player $\P_{\sD}$ now chooses an action from the set $\A_{\sD} (s_t) \in \{0, v_{i_1}, \ldots, v_{i_W} \}$. At state $s_t$,  there are three possibilities: (1)~$\P_{\sD}$ does not analyze any flow, (2)~$\P_{\sD}$ analyze the malicious flow, and (3)~$\P_{\sD}$ analyze a benign flow.  Based on the action chosen by $\P_{\sD}$, i.e., $\A_{\sD}(s_t)$, the next possible state of the game $s_{t+1}$ is defined.  At $s_{t+1}$ $\P_{\sA}$ has two possibilities to choose from: (a)~to quit the flow, $\phi$ and (b)~to transition to an out-neighbor of node $v_{i_1}$, $\cN( v_{i_1})$,  i.e., $\A_{\sA} (s_{t+1}) \in \{\phi\}\cup \cN( v_{i_1})$. Based on the action of $\P_{\sA}$ and the distribution of the benign flows in the system,  the next state $s_{t+2}$ is arrived. At $t+2$, $\P_{\sD}$ again chooses its action and the game continues in a turn-based fashion. Note that the DIFT will analyze only one information flow at a time $t$ as we consider a single malicious flow in the system.

For case~(1) the action of $\P_{\sD}$ is $\A_{\sD} (s_t) = 0$, i.e.,  DIFT does not perform security analysis on any of the $W$ information flows. Then the next state of the game $s_{t+1} = (A, v_{i_1}, \ldots, v_{i_W})$. 
For case~(2),  $\P_{\sD}$  chooses action $\A_{\sD} (s_t) = v_{i_1}$ and the next state of the game is $s_{t+1} = (A, v_{i_1},  \ldots, v_{i_W})$ with probability $FN$ and $s_{t+1} = (A, \tau, v_{i_2},  \ldots, v_{i_W})$ with probability $1-FN$.  For case~(3), the player $\P_{\sD}$  chooses action $\A_{\sD} (s_t) = v_{i_k}$,  $k \neq 1$, i.e., $k \in \{2, \ldots, W\}$. Then the 
next state of the game is $s_{t+1} = (A, v_{i_1}, \ldots, v_{i_{k-1}}, \tau,  v_{i_{k+1}}, \ldots, v_{i_W})$ with probability $FP$ and $s_{t+1} = (A, v_{i_1}, \ldots, v_{i_W})$ with probability $1-FP$. Here, $FN$ and $FP$ are the rate of false-negatives and false-positives in the DIFT-architecture which are empirically computed. The values of $FP$ and $FN$ depend on the security rules and  vary across the different DIFT-architectures and they determine the transition probabilities ${P}$ of the stochastic game. In order to reduce false-positives and false-negatives in the system, the security rules  must be capable of identifying the behavior and predicting the intend  of information flows, i.e., determine whether an unknown flow is indeed malicious, or whether it is benign flow that is exhibiting malware-like behavior  \cite{YinSonEgeChrEng-07}.

If $t+1$ is not a terminating time instant, then the adversarial player $\P_{\sA}$ chooses action $\A_{\sA} (s_{t+1}) \in \{\phi\} \cup \cN(v_{i_1}) $. The remaining $W-1$ flows follow the benign flow distribution in the system, which is computed empirically from the nominal system operation and known. Let $\pi_{\sB} : v_i\in V_{\G} \rightarrow [0, 1]^{|\cN(v_{i})|+1}$ denote the benign flow distribution.  If the action of $\P_{\sA}$ is $\phi$, then $s_{t+2} = (D, \phi, v_{j_2}, \ldots, v_{j_W})$, where $v_{j_2}, \ldots, v_{j_W}$ depend on the distribution $\pi_{\sB}$. Here $\{v_{j_2}, \ldots, v_{j_W}\} \in \{\phi\} \cup V_{\G}$.  Note that a benign flow can also drop out. Now $\P_{\sD}$ chooses its action and the game continues.  To summarize, given $s_t = (D, v_{i_1}, \ldots, v_{i_W}) $ (w.p stands for with probability), state transitions and the corresponding transition probabilities, given by ${P}$, are 
{\scalefont{0.95}{
 \begin{equation}\label{eq:transition_D}
 s_{t+1} =
 \begin{cases}
 \begin{array}{lll}
\hspace*{-2.5 mm} {\scalefont{0.2} (A, v_{i_1}, \ldots, v_{i_W}) }, &\hspace*{-5 mm}  \mbox{~w.p~} 1,  &\hspace*{-3 mm}\mbox{if~} {\scalefont{0.8}\A_{\sD}(s_t)=0},\\
\hspace*{-2.5 mm}    (A, v_{i_1}, \ldots, v_{i_W}), &\hspace*{-5 mm} \mbox{~w.p~} FN, &\hspace*{-3 mm}  \mbox{if~}  \A_{\sD}(s_t)=v_{i_1},\\
\hspace*{-2.5 mm}     (A, \tau, \ldots, v_{i_W}),  &\hspace*{-5 mm} \mbox{~w.p~} 1-FN, &\hspace*{-3 mm}  \mbox{if~}  \A_{\sD}(s_t)=v_{i_1},\\
\hspace*{-2.5 mm}  {\scalefont{0.8}(A, v_{i_1}, \ldots, v_{i_W})}, &\hspace*{-5 mm} \mbox{~w.p~} {\scalefont{0.85}1-FP}, &\hspace*{-3 mm} \mbox{if~}  {\scalefont{0.8}\A_{\sD}(s_t)=v_{i_k}, k\neq 1}\\
\hspace*{-2.5 mm} (A, v_{i_1}, \ldots, \tau,  \ldots, v_{i_W}), &~\hspace*{-5 mm} \mbox{~w.p~} FP, & \hspace*{-3 mm} \mbox{~if~}  \A_{\sD}(s_t)=v_{i_k}, k\neq 1.
 \end{array}
 \end{cases}
 \end{equation}
 }}
and
 {\scalefont{0.95}{
  \begin{equation}\label{eq:transition_A}
 s_{t+2} =
 \begin{cases}
 \begin{array}{ll}
{\scalefont{0.9} (D, v_{j_1}, \ldots, v_{j_W})},  &\mbox{~if~}{\scalefont{0.9} \A_{\sA}(s_{t+1})\in \cN(v_{i_1})},\\
{\scalefont{0.9}(D, \phi, v_{j_2}, \ldots, v_{j_W})},& \mbox{~if~}  {\scalefont{0.9}\A_{\sA}(s_{t+1})=\phi}.
 \end{array}
 \end{cases}
 \end{equation}
 }}

\subsection{Strategies of the Players}\label{subsec:stochastic_policy}
A strategy  is a rule that each player uses to select actions at every step of the game. We consider {\em mixed} (stochastic) and {\em behavioral}  player strategies.  
Since the  strategy is mixed, at a  state, $\P_{\sD}$ and $\P_{\sA}$ select an action from the action set $\A_{\sD}$ and $\A_{\sA}$, respectively, based on some probability distribution.
Further, since the strategy is behavioral, the probability distribution on the action set at a time instant $t$ depends  on all the states traversed  by the game and the actions taken by the players until time $t$.   

Let $d_t, a_t$ be the action of the defender and the attacker, respectively, at time $t$. We denote the information available to the  players  $\P_{\sD}$ and $\P_{\sA}$ at time $t \leqslant T$ by ${\bf {Y}}_t$ and ${\bf {Z}}_t$, respectively. Then
\begin{eqnarray}
{\bf {Y}}_t &:=& \{ {s}_0, d_0, {s}_1, d_1, \ldots, {s}_{t-1}, d_{t-1}, {s}_t\}, \nonumber \\
{\bf {Z}}_t &:=& \{s_0, a_0, {s}_1, a_1, \ldots, {s}_{t-1}, a_{t-1}, {s}_t\}.
\end{eqnarray}
We denote the set of all possible outcomes for ${\bf {Y}}_t$ and ${\bf {Z}}_t$ at time $t$ using ${\Y}^{\*}$ and ${\Z}^{\*}$, respectively. 
Let the set of all  pure  strategies of $\P_{\sD}$ and  $\P_{\sA}$ for game ${\bf G}$ be ${\bar{{\bf P}}}_{\sD}$ and ${\bar{{\bf P}}}_{\sA}$, respectively. Define $\Delta{\bar{{\bf P}}}_{\sD}$ ($\Delta{\bar{{\bf P}}}_{\sA}$) as the simplex of ${\bar{{\bf P}}}_{\sD}$  (${\bar{{\bf P}}}_{\sA}$) or, the set of probability distributions over ${\bar{{\bf P}}}_{\sD}$ (${\bar{{\bf P}}}_{\sA}$). A mixed strategy for $\P_{\sD}$ is an element $p_{\sD} \in \Delta{{\bar{\bf P}}}_{\sD}$, so that $p_{\sD}$ is a probability distribution over ${\bar{{\bf P}}}_{\sD}$. Similarly, a mixed strategy for $\P_{\sA}$ is an element $p_{\sA} \in \Delta{\bar{{\bf P}}}_{\sA}$, so that $p_{\sA}$ is a probability distribution over ${\bar{{\bf P}}}_{\sA}$. A behavioral mixed strategy of player $\P_{\sD}$ is given by ${p}_{\sD}: {\Y}^{\*} \rightarrow \Delta{\bar{{\bf P}}}_{\sD}$ and of player $\P_{\sA}$ is given by ${p}_{\sA}: {\Z}^{\*} \rightarrow \Delta{{\bar{\bf P}}}_{\sA}$.
\subsection{Payoffs to the Players}\label{subsec:payoff}
The payoff functions of $\P_{\sD}$ and $\P_{\sA}$ are denoted by $U_{\sD}$ and $U_{\sA}$, respectively. $U_{\sD}$ consists of  three components: (i)~resource cost $\C_{\sD}(v_i)<0$ for performing security analysis of a flow at a node $v_i \in V_{\G}$, (ii)~penalty $\beta_{\sD} < 0$ for adversary reaching a destination node in  $\D$, and (iii)~reward $\alpha_{\sD} > 0$ for detecting  the adversary. Similarly, $U_{\sA}$ consists of  two components: (i)~a reward  $\beta_{\sA}>0$ for reaching a destination node in the set $\D$, and (ii)~penalty $\alpha_{\sA} < 0$ for getting detected by the defender. The reward and penalty parameters and the resource cost of DIFT and APT do not necessarily induce a zero-sum scenario where the payoff that is gained by one player is lost by the other. Therefore, we considered a non-zero sum payoff structure for the DIFT vs. APT game.

Let the payoff of  player $\P_x$ at an absorbing  state ${s}_t$ be denoted as ${c}^x({s}_t)$, where $x \in \{A, D\}$. Also, let the payoff of $\P_{\sD}$ at a non-absorbing defense-controlled state ${s}_t$ with action $d_t$ be ${r}^{\sD}(s_t,  d_t)$, and the payoff of $\P_{\sA}$ at a non-absorbing adversary-controlled state ${s}_t$ with action $a_t$ be ${r}^{\sA}(s_t,  a_t)$.  At each state in the game, $s_t$ at time $t$, where $t<T$ and $s_t$ is a non-absorbing state,  player chooses its action  ($d_t$ for $s_t \in {\bf S}_{\sD}$ and $a_t$ for $s_t \in {\bf S}_{\sA}$)  and receives payoff ${r}^{\sD}(s_t, d_t)$ and ${r}^{\sA}(s_t, a_t)$, respectively, and the game transitions to a next state $s_{t+1}$. This is continued until they reach an absorbing state  and incur ${c}^{\sA}(s_t)$ and ${c}^{\sD}(s_t)$, respectively, or the game arrives at the horizon, i.e., $t=T$.  Then, 
 
\begin{equation}\label{eq:new-1}
r^{\sA}(s_t,  a_t)=0~\mbox{ for ~all ~} s_t, a_t,
\end{equation}

\begin{eqnarray}\label{eq:new-2}
{c}^{\sA}({s}_t)\hspace*{-2 mm}&=&\hspace*{-2 mm}
\begin{cases}
\begin{array}{ll}
 \alpha_{\sA},&  s_t \in {\bf S}_{\sA}, ~ s_t= (A, \tau,  \ldots, v_{i_W}) \\
\beta_{\sA},&  s_t \in {\bf S}_{\sA}, ~ {s}_t= (A, v_{i_1},  \ldots, \tau,  \ldots, v_{i_W}) \\
0,& {\rm otherwise}
\end{array}
\end{cases}
\end{eqnarray}
\begin{eqnarray}\label{eq:new-3}
{c}^{\sD}({s}_t)\hspace*{-2 mm}&=&\hspace*{-2 mm}
\begin{cases}
\begin{array}{ll}
 \beta_{\sD},&  s_t \in {\bf S}_{\sD}, ~{s}_t=  (D, v_{i_1}, \ldots, v_{i_W}),  v_{i_1} \in  \D \\
  \alpha_{\sD},&  s_t \in {\bf S}_{\sD}, ~ s_t= (A, \tau,  \ldots, v_{i_W}) \\
0,& {\rm otherwise}
\end{array}
\end{cases}
\end{eqnarray}
\begin{eqnarray}\label{eq:new-4}
{r}^{\sD}(s_t, d_t)\hspace*{-2 mm}&=&\hspace*{-2 mm}
\begin{cases}
\begin{array}{ll}
 \hspace*{-2 mm}0,& s_t \in {\bf S}_{\sD}, ~d_t=0\\
\hspace*{-2 mm}  \C_{\sD}(v_{i_k}),& s_t \in {\bf S}_{\sD}, ~d_t=v_{i_k}, k \in \{1, \ldots, W\}.
\end{array}
\end{cases}
\end{eqnarray}

As the initial state of the game is ${\bf s}_0$, for a strategy pair $({p}_{\sD}, {p}_{\sA})$ the expected payoffs of the players are 
\begin{eqnarray}
{U}_{\sA}({p}_{\sD}, {p}_{\sA}) &=& \mathbb{E}_{{\bf s}_0,{p}_{\sA}, {p}_{\sD}} \left[\sum\limits_{t = 0}^{T}({R}^{\sA}_t)\right] \mbox{and}\label{eq:Adv_obj1}\\
{U}_{\sD}( {p}_{\sD}, {p}_{\sA}) &=& \mathbb{E}_{{\bf s}_0,{p}_{\sA}, {p}_{\sD}} \left[\sum\limits_{t = 0}^{T}({R}^{\sD}_t)\right],\label{eq:Def_obj1}
\end{eqnarray}
where $\mathbb{E}_{{\bf s}_0,{p}_{\sA}, {p}_{\sD}}$ denotes the expectation with respect to ${\bf s}_0, {p}_{\sA}$, and ${p}_{\sD}$ and from Eqs.~\eqref{eq:new-1}-\eqref{eq:new-4}
\begin{equation}\label{eq:new-5}
{R}^x_t=
\begin{cases}
\begin{array}{ll}
{r}^{\sD}(s_t, d_t),& x=D,~ s_t \in {\bf S}_{\sD},~t < T\\
{r}^{\sA}(s_t, a_t),& x=A,~ s_t \in {\bf S}_{\sA},~t < T\\
{c}^{\sD}(s_t),& s_t \in {\bf S}_{\sD},~t=T,\\
{c}^{\sA}(s_t),& s_t \in {\bf S}_{\sA},~t=T.
\end{array}
\end{cases}
\end{equation}

We incorporate the idea of maximizing the probability of detection and minimizing the  probability of adversary evading  detection using the terms ${p}_{\sT}$ and ${p}_{\sR}$, respectively.  Further, we capture the false-positives generated in the system using the term ${p}_{\sFP}$. Note that, generation of false-positive implies that the defender failed to capture the adversary and concluded a benign flow is malicious. This means that the adversary will attain the target as the malicious flow evaded detection. 
Given a set of strategies $({p}_{\sD} , {p}_{\sA} )$, where ${p}_{\sD} \in {{\bf P}}_{\sD}$ and ${p}_{\sA} \in {{\bf P}}_{\sA}$, and benign distribution $\pi_{\sB}$, the payoff functions of the players, i.e., Eqs.\eqref{eq:Adv_obj1}, \eqref{eq:Def_obj1}, can be rewritten using Eqs.~\eqref{eq:new-1}-\eqref{eq:new-4}, \eqref{eq:new-5} as
{\scalefont{0.95}{
\begin{eqnarray}
{U}_{\sD}({p}_{\sD} , {p}_{\sA} ) &\hspace*{-3 mm}=&\hspace*{-3 mm}{p}_{\sT}\, \alpha_{\sD}+ ({p}_{\sR}+{p}_{\sFP})\, \beta_{\sD} \hspace*{-1 mm}+ \hspace*{-2 mm}\sum_{{\bf s} \in \bhS_{\sD}}   \sum_{v_i \in {\bf s} }\Big({p}_{\sD}(v_i) \C_{\sD}(v_i)\Big)\hspace*{-1 mm}, \label{eq:Ud1}\\
{U}_{\sA}({p}_{\sD} , {p}_{\sA} )&\hspace*{-3mm}=& \hspace*{-4mm}   \Big({p}_{\sT}\, \alpha_{\sA}+ ({p}_{\sR}+{p}_{\sFP})\, \beta_{\sA} \Big). \label{eq:Ua1}
\end{eqnarray}
}}
Here  ${p}_{\sT}$ is the cumulative probability that the  adversarial flow is detected by the defender. The term ${p}_{\sT}\, \alpha_{\sD}$ captures the  reward received by DIFT for detecting the attack and the term ${p}_{\sT}\, \alpha_{\sA}$ captures the penalty incurred by APT for getting detected.  Also, ${p}_{\sR}$ denotes the cumulative probability that the adversarial flow  reaches a destination and ${p}_{\sFP}$ denotes the cumulative probability that a benign flow is concluded as malicious (i.e., trapped) by the defender, i.e., false-positive. The term $({p}_{\sR}+{p}_{\sFP})\, \beta_{\sD}$  captures the  penalty incurred by DIFT for not detecting the attack and the term $({p}_{\sR}+{p}_{\sFP})\, \beta_{\sA}$ captures the  reward received by the APT for  not getting detected. Recall that ${p}_{\sD}(v_i)$ denotes the probability with which DIFT selects node $v_i$ as a security check point (trap). The term $\sum_{{\bf s} \in \bhS_{\sD}}   \sum_{v_i \in {\bf s} }({p}_{\sD}(v_i) \C_{\sD}(v_i))$ captures the total resource cost associated with DIFT for performing  security analysis.  Note   that ${p}_{\sT}$, ${p}_{\sR}$ are functions of ${p}_{\sD}$ and ${p}_{\sA}$, and ${p}_{\sFP}$  is a function of ${p}_{\sD}$  and $\pi_{\sB}$.  Our focus is to compute a  limiting average equilibrium of the nonzero-sum  stochastic game.
\section{Computation of Optimal Strategy}\label{sec:comp}
\subsection{Solution Concept}\label{sec:solution_concept}
In this subsection, we present the solution concept of the game. 
\begin{defn}\label{def:BR}
Let ${p}_{\sA}: {\Z}^{\*}  \rightarrow [0, 1]^{|\A_{\sA}|}$ denote a strategy of the  adversary. Also let ${p}_{\sD}: {\Y}^{\*}   \rightarrow [0, 1]^{|\A_{\sD}|}$ denote a strategy of the  defender, i.e., the probability of performing security analysis at a state. Then the best response of the defender is given by
$$\mbox{BR}({{p}_{\sA}) = \arg\max_{{p}_{\sD} \in {{\bf P}}_{\sD}}}{\{{U}{\sD}({{{p}}}_{\sD}, {{{p}}}_{\sA})\}}.$$ 
Similarly, the best responses of the adversary  are given by $$\mbox{BR}({{{p}}}_{\sD}) = \arg\max_{{p}_{\sA}\in {{\bf P}}_{\sA}}{\{{U}{\sA}({{p}}_{\sD}, {{{p}}}_{\sA}) \}}.$$
\end{defn}
The best response of the defender is the set of defense strategies that maximize the payoff of the defender for a given adversarial strategy and known benign flow distribution. The best response of the adversary is a set of transition strategies, for a given defender strategy and known benign flow distribution, that maximizes the probability of adversary reaching a destination node without getting detected.
\begin{defn}\label{def:NE}
A pair of strategies $({p}_{\sD}, {p}_{\sA})$ is said to be a Nash equilibrium (NE)  if $$ {p}_{\sD} \in \mbox{BR}({p}_{\sA}) \mbox{~and~} {p}_{\sA} \in \mbox{BR}({p}_{\sD}).$$
\end{defn}
An NE is a pair of strategies such that no player can benefit through unilateral deviation, i.e., by changing the strategy while the other player keeps the strategy unchanged.

  \begin{lemma}\label{lem:cardinality_SS}
 Consider the  APT vs. DIFT game ${\bf G} = \{\bS, {\bf s}_0, \Sigma, \Delta, P \}$. Let $N$ be the number of nodes in the IFG and $W$ be the number of flows that arrive into the system at time $t=1$. Then $|\bS| = O(N^W)$.
 \end{lemma}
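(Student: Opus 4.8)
The plan is to count the state space by directly enumerating how a state $s_t = (x, v_{i_1}, \ldots, v_{i_W})$ is specified. A state consists of two ingredients: the control bit $x \in \{A, D\}$, and the tuple of locations $(v_{i_1}, \ldots, v_{i_W})$ where each coordinate takes values in $V_{\G} \cup \{\phi, \tau\}$. The control bit contributes only a constant factor of $2$, so the dominant contribution comes from counting the possible location-tuples. First I would observe that each of the $W$ coordinates can independently take any of the values in $V_{\G} \cup \{\phi, \tau\}$, a set of size $N + 2$. This gives a crude upper bound of $2 \cdot (N+2)^W$ on the number of states.

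The key step is then to argue that $(N+2)^W = O(N^W)$. Since $W$ is a fixed constant (the number of flows that arrive at $t=1$, independent of $N$), we can expand $(N+2)^W = \sum_{j=0}^{W} \binom{W}{j} N^j 2^{W-j}$; every term is $O(N^W)$ because $N^j \leqslant N^W$ for $j \leqslant W$ and the binomial coefficients and powers of $2$ are constants depending only on the fixed $W$. Hence the whole sum is $O(N^W)$, and multiplying by the constant factor $2$ from the control bit preserves the bound. This yields $|\bS| = O(N^W)$.

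\medskip

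\noindent A cleaner phrasing I would prefer for the write-up avoids the binomial expansion: since $W < N$ in our model (indeed $W \ll N$), we have $N + 2 \leqslant 3N$ for all $N \geqslant 1$, so $(N+2)^W \leqslant (3N)^W = 3^W N^W$, and with $3^W$ a constant this is immediately $O(N^W)$. Either route gives $|\bS| \leqslant 2 \cdot 3^W N^W = O(N^W)$.

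\medskip

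\noindent I do not anticipate a serious obstacle, since this is an elementary counting argument. The only subtlety worth flagging is the treatment of the absorbing and structurally-forbidden tuples: not every element of $\{A,D\} \times (V_{\G} \cup \{\phi,\tau\})^W$ is a reachable or well-formed state (for instance, the constraint that the $W$ flows arrive at distinct nodes, or that a dropped flow stays dropped, or that at most one coordinate equals $\tau$). These restrictions only \emph{reduce} the count, so they are harmless for an upper bound, and I would simply remark that imposing them cannot increase $|\bS|$ beyond the product bound. If a matching lower bound were wanted one would need to exhibit $\Omega(N^W)$ genuinely distinct reachable states, but the statement asks only for the $O(\cdot)$ upper bound, so the straightforward product count suffices.
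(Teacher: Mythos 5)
Your proposal is correct, but it takes a genuinely simpler route than the paper. The paper counts more carefully: it first invokes the structural constraints (the $W$ flows occupy distinct nodes, at most one coordinate can be $\tau$, repetition is possible only for $\phi$), then splits states into those with entries in $V_{\G} \cup \{\tau\}$ and no repetition, counted by $\binom{N+1}{W} = O(N^W)$, and those with at least one $\phi$ entry, counted by a sum of smaller binomial coefficients, finally adding the initial state ${\bf s}_0$ separately. You instead discard all structural constraints, observing (correctly) that they can only decrease the count, and bound $|\bS|$ by the full product $2(N+2)^W$, absorbing the constant via $W = O(1)$. Your approach buys simplicity and robustness: it avoids the case analysis entirely, and incidentally cannot commit the paper's small slip in case (b), where the sum $\binom{N+1}{W-1} + \cdots + \binom{N+1}{1}$ is stated to be $O(N^{W-2})$ when it is in fact $O(N^{W-1})$ (harmless for the conclusion). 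The paper's finer count buys a sharper implied constant (the dominant term is $\binom{N+1}{W} \sim N^W/W!$ rather than $(N+2)^W$), which is the sort of refinement one would need for a matching $\Omega(N^W)$ lower bound that the lemma does not ask for. Two minor points: your enumeration over $(V_{\G} \cup \{\phi,\tau\})^W$ omits the initial state $(A, s_{\sF}, \ldots, s_{\sF})$, whose coordinates lie outside $V_{\G}$, so strictly you should add $1$ to your bound, as the paper does; and both of your simplifications ($N+2 \leqslant 3N$ and the binomial expansion) require $W$ to be a fixed constant, which is the same convention the paper implicitly adopts in writing $O(N^W)$.
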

 \begin{proof}
We prove the result by showing that $|{\bf S}_{\sA}| = O(N^W)$ and $|{\bf S}_{\sD}| = O(N^W)$.   Consider an  arbitrary state ${\bf s} = (x, v_{i_1}, \ldots, v_{i_W} ) \in {\bf S} \setminus {\bf s}_0$, where $x \in \{A, D\}$. Here, $v_{i_k}\in V_{\G} \cup \{\phi, \tau\}$ for all $k \in \{1, \ldots, W\}$.  All the $W$ tagged flows arrive at distinct nodes in the set $V_{\G}$ of the IFG as a process or file in the system receive exactly one information flow at a particular time. Therefore, for a state  ${\bf s} =(x, v_{i_1}, \ldots, v_{i_W})$, where  $v_{i_k}\in V_{\G}$ for all $k \in \{1, \ldots, W\}$, $i_a \neq i_b$ for $a, b \in \{1, \ldots, W\}$.  Also, the defender performs security analysis on one tagged flow at a time and hence there is no state ${\bf s} =(x, v_{i_1}, \ldots, v_{i_W})$ with $v_{i_a}=v_{i_b}=\tau$ for $a\neq b$.   
 Repetition of $v_{i_k}$'s is possible only for states with $\phi$. Hence ${\bf s}$ belongs to one of the two types: (a)~$\{{\bf s} =(x, v_{i_1}, \ldots, v_{i_W}): v_{i_k}\in V_{\G} \cup \{\tau\}\}$ and (b)~$\{{\bf s} =(x, v_{i_1}, \ldots, v_{i_W}): v_{i_k}\in V_{\G} \cup \{\phi\}$ such that $v_{i_k}=\phi$ for at least some $k \in \{1, \ldots, W\}\}$.


Case~(a) corresponds to selecting $W$ items from a set of $N+1$ items without any repetitions. The cardinality of this set is ${N+1}\choose {W}$ $ = \frac{(N+1)!}{(W)!(N+1-W)!} = O(N^W)$. Case~(b) corresponds to states with one or more $\phi$ entries. Note that, here repetitions are allowed only for  $\phi$. The cardinality of this set is ${{N+1}\choose {W-1}}+ {{N+1}\choose {W-2}}+ \ldots+ {{N+1}\choose {2}} +  {{N+1}\choose {1}} = O(N^{W-2})$. Additionally, there is an initial state ${\bf s}_0$. From (a) and~(b), the number of possible cases for ${\bf s}$ is $O(N^W)$. Since ${\bf s} = (x, v_{i_1}, \ldots, v_{i_W} )$, where $x \in \{A, D\}$, we get  $|\bS_{\sA}| = O(N^W)$ and $|\bS_{\sD}| = O(N^W)$. As $\bS = \bS_{\sA} \cup \bS_{\sD}$, we get $|\bS| =  O(N^W)$.
 \end{proof}
 
It is shown that there exists an NE for nonzero-sum discounted  stochastic games \cite{sobel1971noncooperative}. However, the existence of NE for nonzero-sum undiscounted stochastic games is open when the time horizon is infinite, i.e., $T=\infty$. In the result below  we prove the existence of NE for the game ${\bf G}$. The existence is shown by proving that time horizon is indeed finite for ${\bf G}$ and then invoking the existence of NE for finite horizon undiscounted games.

\begin{prop}\label{prop:TH}
Let $T$ denote the termination time of the APT vs. DIFT game. The APT vs. DIFT game, ${\bf G}$, terminates in at most  is $2N$ steps, i.e., $T \leqslant 2N$.
\end{prop}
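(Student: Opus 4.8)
The plan is to bound the length of any play of the game by tracking the malicious flow's position, which is the first coordinate $v_{i_1}$ of the state. The key observation is that attack paths in $\F$ are \emph{simple} directed paths from $s_{\sF}$ to $t_{\sF}$ (Definition~\ref{def:attack_path}), so the malicious flow visits each node of $V_{\G}$ at most once before being trapped, dropping out, or reaching a destination. Since $|V_{\G}| = N$, the malicious flow can occupy at most $N$ distinct nodes along its trajectory.

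First I would analyze the turn-based structure to see how many time steps correspond to each move of the malicious flow along its path. Inspecting the transition rules~\eqref{eq:transition_D} and~\eqref{eq:transition_A}, the game alternates between defender-controlled states in $\bS_{\sD}$ and adversary-controlled states in $\bS_{\sA}$: at an even time $t$ the defender acts (choosing whether and where to perform security analysis), and at the following odd time $t+1$ the adversary acts (choosing to drop the malicious flow via $\phi$ or to transition to an out-neighbor in $\cN(v_{i_1})$). Crucially, the first coordinate $v_{i_1}$ changes \emph{only} when the adversary makes a genuine transition to a new node; the defender's move and the benign-flow redistribution never advance the malicious flow to a new node of its simple path. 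Therefore each advance of the malicious flow along its attack path consumes exactly two time steps (one defender turn plus one adversary turn).

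The core counting argument then runs as follows. Along any realized play, the sequence of distinct nodes occupied by the malicious flow forms a simple path in $\F$, hence has length at most $N$ (it visits at most $N$ vertices of $V_{\G}$). Each such node is reached after a defender-adversary pair of turns, contributing two steps, so the malicious flow reaches the final node of its path in at most $2N$ steps. At that point one of the absorbing conditions (i)--(iii) from Section~\ref{subsec:actionspace} must hold: either $v_{i_1} \in \D$ (a destination is reached), or the flow has been trapped ($v_{i_1} = \tau$), or it has dropped ($v_{i_1} = \phi$); in every case the game terminates per Definition~\ref{def:terminate}. Combining, $T \leqslant 2N$.

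The main obstacle is verifying rigorously that no time step is ever ``wasted'' without advancing the malicious flow along its simple path, so that the factor of two in $2N$ is genuinely tight and no additional steps accumulate. Concretely, one must confirm that the defender cannot indefinitely stall the malicious flow at the same node: although a false negative (probability $FN$ in~\eqref{eq:transition_D}) keeps the state at $v_{i_1}$ after a defender turn, the subsequent adversary turn still forces either a transition to a new node or a drop, so the pair of turns always makes progress toward absorption. I would also note that security analysis on a \emph{benign} flow (the $FP$ and $1-FP$ branches) leaves $v_{i_1}$ unchanged, but this again occupies only one defender turn within the same two-step block and does not break the bound. Handling these stochastic branches uniformly — showing that \emph{every} defender--adversary turn pair strictly advances the malicious flow one step along its at-most-$N$-length simple path (or terminates the game) — is the crux, after which the $2N$ bound follows immediately.
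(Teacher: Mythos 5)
Your proposal is correct and follows essentially the same argument as the paper's proof: since every attack path is a simple directed path in $\F$, the adversary makes at most $N$ transitions, and the turn-based alternation of defender and adversary moves doubles this to the bound $T \leqslant 2N$. Your additional verification that the stochastic branches ($FN$, $FP$) cannot stall the malicious flow within a defender--adversary turn pair is a more careful elaboration of what the paper compresses into the single remark ``As ${\bf G}$ is a turn-based game, the game terminates in at most $2N$ steps.''
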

\begin{proof}

The action set $\A_{\sA}$ of the adversary player (APT) is to choose a transition along an attack path  of the flow-network $\F$ (Definition~\ref{def:attack_path})  or to drop out at some point. Let $\Omega_{\D}$ denote the set of attack paths in $\F$.   Consider an arbitrary attack path $\hat{\omega} \in \Omega_{\D}$. Recall that $\hat{\omega}$ is a simple path as an   APT  due to its stealthy behavior will not traverse in cycles through the system. Hence the length of $\hat{\omega}$ is at most $N$, as $\hat{\omega}$ is a simple directed path. Thus in any run of the game, the  adversary can take at most $N$ transitions. As ${\bf G}$ is a turn-based game,  the game terminates in at most $2N$ steps.
\end{proof}


\begin{prop}\label{prop:NE}
There exists a Nash equilibrium for the APT vs. DIFT game ${\bf G}$.
\end{prop}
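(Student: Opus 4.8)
The plan is to exploit the finiteness already established in the preceding results and reduce the problem to the existence of equilibria in a \emph{finite} game, to which Nash's theorem applies. By Proposition~\ref{prop:TH} the game terminates in at most $2N$ steps, so ${\bf G}$ has a finite time horizon $T \leqslant 2N$; by Lemma~\ref{lem:cardinality_SS} the state space is finite with $|{\bf S}| = O(N^W)$; and the action sets $\A_{\sA} \subseteq \{\phi\} \cup V_{\G}$ and $\A_{\sD} \subseteq \{0\} \cup V_{\G}$ are finite as well. First I would observe that these three facts together force the information sets $\Y^{\*}$ and $\Z^{\*}$ to be finite: every element of $\Y^{\*}$ (respectively $\Z^{\*}$) is a history of length at most $T$ assembled from finitely many states and finitely many actions, and there are only finitely many such histories.

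Next I would pass from the stochastic game to its equivalent finite extensive-form representation. Since the horizon is finite and all states, actions, and histories are finite, each player possesses only finitely many pure behavioral strategies: a pure strategy assigns a single action to each of the finitely many histories in $\Y^{\*}$ or $\Z^{\*}$ at which that player moves. The expected payoffs $U_{\sD}(p_{\sD}, p_{\sA})$ and $U_{\sA}(p_{\sD}, p_{\sA})$ in Eqs.~\eqref{eq:Ud1}--\eqref{eq:Ua1} are finite sums of bounded per-stage rewards weighted by the transition probabilities $P$, hence they are well defined and multilinear in the players' mixing probabilities. Consequently ${\bf G}$ is strategically equivalent to a finite nonzero-sum normal-form game whose pure-strategy sets are $\bar{\bf P}_{\sD}$ and $\bar{\bf P}_{\sA}$.

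The existence of an NE then follows in two standard steps. By Nash's theorem every finite normal-form game admits a mixed-strategy equilibrium, so there is a pair $(p_{\sD}, p_{\sA}) \in \Delta{\bar{\bf P}}_{\sD} \times \Delta{\bar{\bf P}}_{\sA}$ that is mutually best responding in the sense of Definitions~\ref{def:BR}--\ref{def:NE}. Because the game has perfect recall---each player's information ${\bf Y}_t$, ${\bf Z}_t$ records the entire observed history of states together with that player's own past actions---Kuhn's theorem guarantees that this mixed equilibrium can be realized as an outcome-equivalent pair of behavioral strategies of the form $p_{\sD}\colon \Y^{\*} \to \Delta{\bar{\bf P}}_{\sD}$ and $p_{\sA}\colon \Z^{\*} \to \Delta{\bar{\bf P}}_{\sA}$, which is exactly the strategy class of ${\bf G}$.

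I expect the main obstacle to be careful bookkeeping rather than deep mathematics. One must verify that the asymmetric, imperfect-information structure (the defender cannot distinguish the malicious flow from the benign flows before analysis) is faithfully encoded as information sets in the extensive form, and that the resulting game genuinely has perfect recall so that Kuhn's theorem legitimately applies. Once the finiteness of the horizon (Proposition~\ref{prop:TH}) and of the state space (Lemma~\ref{lem:cardinality_SS}) are in hand, the existence claim itself is an immediate application of the equilibrium-existence results for finite-horizon undiscounted stochastic games, which sidesteps the discounting hypothesis required in \cite{sobel1971noncooperative}.
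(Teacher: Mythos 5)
Your proof is correct, and it rests on the same pivotal reduction as the paper's: Proposition~\ref{prop:TH} supplies the finite horizon $T \leqslant 2N$, after which existence becomes a statement about a finite game. Where you diverge is in how that final step is discharged. The paper simply invokes the result of \cite{HesPra-01} --- existence of Nash equilibria for finite-horizon nonzero-sum stochastic games with asymmetric information under stochastic behavioral strategies --- and notes that the behavioral strategy space sits inside ${\bf P}_{\sA} \times {\bf P}_{\sD}$; you instead reconstruct that existence result from first principles, passing to the finite extensive form (finitely many histories, by Lemma~\ref{lem:cardinality_SS}, the finite action sets, and the finite horizon), applying Nash's theorem to the induced finite normal-form game, and then using Kuhn's theorem with perfect recall (each player's history ${\bf Y}_t$ or ${\bf Z}_t$ contains all of its own past actions and observations) to realize the mixed equilibrium as an outcome-equivalent pair of behavioral strategies. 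What the paper's route buys is brevity; what yours buys is self-containedness and an explicit verification of the perfect-recall hypothesis that legitimizes behavioral strategies --- a point the paper leaves buried inside the citation. One small bookkeeping item you should still spell out: Kuhn's theorem gives outcome equivalence, so you need the (standard, but not automatic) remark that any profitable behavioral deviation would translate back into a profitable mixed deviation, so the behavioral pair inherits the equilibrium property. With that noted, both arguments are sound, and your closing observation that finiteness of the horizon is what sidesteps the open infinite-horizon undiscounted case matches exactly the paper's stated motivation for proving Proposition~\ref{prop:TH} first.
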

\begin{proof}
It is shown in  \cite{HesPra-01} that there exists a Nash equilibrium for a nonzero-sum stochastic game with asymmetric information structure, under stochastic behavioral strategies, when the time horizon is finite. Proposition~\ref{prop:TH} prove that the APT vs. DIFT game will terminate in $2N$ number of steps. Further, the behavioral strategy space is a subset of the strategy space ${\bf P}_{\sA} \times {\bf P}_{\sD}$. Hence by the result in  \cite{HesPra-01}, the proof follows.
\end{proof}

\subsection{Solution Approach}\label{sec:solution_approach}

In this section, we compute  an NE of the DIFT vs. APT game ${\bf G}$.
 Our approach is based on a minimum capacity cut-set formulation on a flow-network constructed from the information flow graph of the system.
Consider the flow-network $\F=(V_{\F}, E_{\F})$, where $V_{\F} = V_{\G} \cup \{s_{\sF}, t_{\sF}\}$ and $E_{\F} = E_{\G} \cup \{s_{\sF}\times \lambda\} \cup \{\D \times t_{\sF}\}$. 
Then, a {\em cut} of $\F$ is defined below.
\begin{defn}\label{def:cut}
In a flow-network $\F$ with vertex set $V_{\sF}$ and directed edge set  $E_{\sF}$,  the cut induced by $\hat{\S} \subset V_{\sF}$  is a subset of edges $\kappa(\hat{\S}) \subseteq E_{\sF}$ such that for every $(u,v) \in \kappa(\hat{\S})$, $|\{u,v\} \cap \hat{\S}| = 1$. Further, given edge capacity vector $c_{\sF}: E_{\sF} \rightarrow \R_{+}$, the capacity of a cut $\kappa(\hat{\S})$, is defined  as the sum of the capacities of the edges in the cut, i.e., $c_{\sF}(\kappa(\hat{\S})) = \sum_{e \in \kappa(\hat{\S})} c_{\sF}(e).$
\end{defn}

The  {\em (source-sink)-min-cut problem} aims to find a cut $\kappa(\hat{\S}^\*)$  of  $\hat{\S}^\* \subset V_{\sF}$ such that $c_{\sF}(\kappa(\hat{\S}^\*)) \leqslant c_{\sF}(\kappa(\hat{\S}))$
for any cut  $\kappa(\hat{\S})$ of $\hat{\S} \subset V_{\sF}$ satisfying $s_{\sF} \in \hat{\S}$ and $t_{\sF} \notin \hat{\S}$. The (source-sink)-min-cut problem is  well studied and there exist algorithms to find a min-cut in time  polynomial   in $|V_{\sF}|$ and $|E_{\sF}|$ \cite{Orl:93}. In our approach to compute NE, which is detailed later in the section,  we find a min-cut through nodes of the flow-network instead of the edges. Hence we transform the node version of the min-cut problem  to an equivalent edge version. For this, we introduce an edge corresponding to each node in $\F$ except the source and sink nodes.  
The transformed flow-network is denoted as $\overbar{\F} = (\overbar{V}_{\sF}, \overbar{E}_{\sF})$, where $ \overbar{V}_{\sF} = V_{\F} \cup V'_{\G} $ and $ \overbar{E}_{\sF} = E'_{\F} \cup E'_{\G} \cup E_{\lambda}\cup E_{\sD}$ with $V'_{\G} = \{v'_1, \ldots, v'_N\}$, ${E'}_{\F}=\{(v'_i, v_j): (v_i, v_j)\in E_{\G}\}$, $E_{\lambda} = \{(s_{\sF}, v_i): v_i \in \lambda\}$,  $E_{\sD} = \{(v'_i, t_{\sF}): v_i \in \D\}$, and   $E'_{\G}=\{(v_i, v'_i): i=1,\ldots, N\}$. The capacity vector associated with the edges in $\overbar{\F}$ is given by
\begin{equation}\label{eq:capa}
 c_{\sF} (e) := \begin{cases}
\begin{array}{ll}
\C_{\sD}(v_i), & e \in E'_{\G} \\
\infty, & \mbox{otherwise}
\end{array}
\end{cases}
\end{equation}

 Note that $E'_{\G}$ is a cut of $\overbar{\F}$ and $\sum_{e \in E'_{\G}}c_{\sF}(e) < \infty$. Hence any minimum capacity cut in $\overbar{\F}$ corresponds to edges from the set $E'_{\G}$ as the capacity of the remaining edges  is $\infty$ as shown  in Eq.~\eqref{eq:capa}. Further, the capacity of an edge in set $(v_i, v'_i) \in E'_{\G}$ corresponds to the cost of conducting the security analysis at  node $v_i$. Thus a minimum capacity cut in $\overbar{\F}$ corresponds to a cut node set of the IFG with minimum total cost of performing security analysis.

 Let  $\kappa(\hat{\S}^\*)$ denotes  an optimal solution to the (source-sink)-min-cut problem on $\overbar{\F}$.  Then $\kappa(\hat{\S}^\*) \subseteq E'_{\G}$ and $c_{\sF}(\kappa(\hat{\S}^\*))<\infty$.   The nodes corresponding to the min-cut is 

\begin{equation}\label{eq:mincut}
 \hat{\S}^\* := \{v_i: (v_i, v'_i) \in \kappa(\hat{\S}^\*)\}.
 \end{equation}

In the APT vs. DIFT game, the aim of the defender is to  optimally select trapping nodes in the IFG, i.e., nodes of IFG to perform security analysis,  such that  no adversarial flow reaches  some node in $\D$. In other words, defender ensures that all adversarial flows that originate in node  $s_{\sF}$ gets detected before reaching node $t_{\sF}$. In order to ensure  security, the defender must  select at least  one node in all possible paths from $s_{\sF}$ to $t_{\sF}$ as a trapping node. In the equilibrium analysis of the game we prove that an optimal strategy of the defender is indeed to select the min-cut nodes  of the flow-network as trapping nodes. 
The objective of the adversary is to optimally choose the transitions in such a way that the probability of reaching $t_{\sF}$ is maximum.  The adversary hence plans its transitions to select an {\em attack path} with least probability of detection.  

The result below proves  that an NE of game ${\bf G}$ is represented by the nodes corresponding to a minimum capacity cut in the flow-network $\overbar{\F}$. Note that, the solution to the min-cut problem may not be unique.  Consequently, there may exist multiple NE for the game ${\bf G}$. 
\begin{theorem}\label{thm:mincutNE}
Every NE $(p_{\sA}, p_{\sD})$ of the APT vs. DIFT game ${\bf G}$ satisfies the following properties:
\begin{itemize}
\item[1)] The defender's strategy $p_{\sD}$ selects all the nodes in $ \hat{\S}^\*$ as trapping nodes, where $ \hat{\S}^\*$ is a set of min-cut nodes of the flow network $\overbar{\F}$. 
\item[2)] The adversary's strategy $p_{\sA}$ chooses transitions such that each attack path passes through exactly one node in the set $ \hat{\S}^\*$.
\end{itemize}
\end{theorem}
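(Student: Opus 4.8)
The plan is to establish both properties by characterizing the two players' best responses and then invoking min-cut/max-flow duality on $\overbar{\F}$. First I would record the monotonicity built into the payoffs: in $U_{\sA} = {p}_{\sT}\,\alpha_{\sA} + ({p}_{\sR}+{p}_{\sFP})\,\beta_{\sA}$ we have $\alpha_{\sA}<0<\beta_{\sA}$, and ${p}_{\sFP}$ depends only on $\pi_{\sB}$ and $p_{\sD}$ (not on the adversary's route), so for a fixed $p_{\sD}$ the adversary's payoff is strictly decreasing in the detection probability ${p}_{\sT}$ of the malicious flow and strictly increasing in its reach probability ${p}_{\sR}$. Consequently the adversary prefers attack paths that minimize its exposure to trapping nodes (each trap detecting with probability $1-FN$). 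Dually, since $\alpha_{\sD}>0$, $\beta_{\sD}<0$ and $\C_{\sD}(\cdot)<0$, the defender's payoff $U_{\sD}$ is increasing in ${p}_{\sT}$, decreasing in ${p}_{\sR}+{p}_{\sFP}$, and decreasing in the total trapping cost; thus the defender wants to intercept the malicious flow along every route the adversary might use, while spending as little resource as possible and avoiding gratuitous analysis of benign flows (which only inflates ${p}_{\sFP}$ and the cost).

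Next I would prove property~1 in three steps. \emph{(i) The defender's trapping set must be an $s_{\sF}$--$t_{\sF}$ cut in $\overbar{\F}$.} If the set of nodes carrying positive trapping probability were not a cut, some attack path $\omega\in\Omega_{\D}$ would avoid every trap; by the monotonicity above, the adversary's best response routes the malicious flow along $\omega$, forcing ${p}_{\sR}$ bounded away from $0$, and then adding a single finite-cost trap on $\omega$ strictly raises $U_{\sD}$, contradicting the NE condition $p_{\sD}\in\mbox{BR}(p_{\sA})$. \emph{(ii) The cut must have minimum capacity.} Among node sets that cut all attack paths, the resource term $\sum p_{\sD}(v_i)\C_{\sD}(v_i)$ makes any cut of larger total cost dominated by a cheaper one delivering the same covering guarantee; by the capacity assignment in Eq.~\eqref{eq:capa}, a cheapest such node set is a minimum-capacity cut $\hat{\S}^\*$ as in Eq.~\eqref{eq:mincut}. \emph{(iii) All of $\hat{\S}^\*$ is used.} Because every $\C_{\sD}(v_i)<0$, all capacities on $E'_{\G}$ are strictly positive, so $\hat{\S}^\*$ is minimal under inclusion: deleting any of its nodes destroys the cut property and re-exposes an attack path. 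Hence the defender must place a trap at every node of $\hat{\S}^\*$.

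Then I would establish property~2. Since $\hat{\S}^\*$ is a cut, every $\omega\in\Omega_{\D}$ meets $\hat{\S}^\*$ in at least one node. Suppose the adversary placed positive probability on a path $\omega$ meeting $\hat{\S}^\*$ in two or more nodes; by property~1 each such node carries positive trapping probability, so traversing $\omega$ subjects the malicious flow to two or more independent trap tests, strictly increasing ${p}_{\sT}$ and strictly decreasing ${p}_{\sR}$ relative to a single-crossing path, which by the monotonicity of $U_{\sA}$ contradicts $p_{\sA}\in\mbox{BR}(p_{\sD})$. The existence of a single-crossing path through each min-cut node I would obtain from max-flow/min-cut duality: in a maximum flow saturating $\hat{\S}^\*$, every flow-decomposition path crosses the min-cut exactly once and every min-cut node lies on such a path, so the adversary's single-crossing best responses are consistent and non-degenerate. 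Therefore every attack path in the support of $p_{\sA}$ passes through exactly one node of $\hat{\S}^\*$.

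The main obstacle is the asymmetric information together with the $W-1$ benign flows: I must justify that the defender's per-state action ``analyze flow $k$'' aggregates into node-indexed trapping probabilities $p_{\sD}(v_i)$ on the IFG, so that the cut structure of $\overbar{\F}$ actually governs detection of the \emph{malicious} flow, and that nonzero $FN$ and $FP$ rates only rescale ${p}_{\sT}$, ${p}_{\sR}$, ${p}_{\sFP}$ monotonically without reordering paths by exposure. Making the step~(ii) comparison fully rigorous — that maintaining the covering cut is worth its resource cost at equilibrium, so the defender never prefers to leave an expensive path uncovered — is the other delicate point; I would handle it by comparing $U_{\sD}$ under the min-cut against any strategy failing to cover some path and exhibiting that the detection swing $\alpha_{\sD}-\beta_{\sD}>0$ (appropriately scaled by $1-FN$ and reach probabilities) dominates the incremental finite trapping cost.
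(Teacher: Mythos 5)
The central idea of the paper's proof is something your proposal never establishes: an \emph{equalization (indifference) property} that must hold at every NE. The paper's argument runs in the opposite direction from yours. It first proves, by perturbing the adversary's path probabilities by $\pm\epsilon$ and using both directions of the NE inequality, that at any NE every pair of induced paths $\omega_1,\omega_2\in\Omega$ satisfies $(p(\omega_1)-p(\omega_2))(\alpha_{\sA}-\beta_{\sA})+(f(\omega_1)-f(\omega_2))\beta_{\sA}=0$, and then a sign analysis (using $0<FP,FN<1$ and the fact that the defender's probabilities at a state sum to at most one) rules out the cases where both differences are strictly positive or both strictly negative, forcing $p(\omega)$ and $f(\omega)$ to be \emph{equal} across all paths at NE. Only under these equalization conditions does the defender's payoff collapse, as in Eq.~\eqref{eq:add}, into a constant detection/false-positive term plus the resource-cost term; this is precisely what lets Lemmas~\ref{lem:NEBR} and~\ref{lem:detection_equal} reduce the defender's best response to pure cost minimization over covering node sets, i.e., to the min-cut of $\overbar{\F}$. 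Since the theorem quantifies over \emph{every} NE, this necessary condition is the load-bearing step, and your proposal omits it entirely.

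Without equalization, both of your substitute steps for property~1 have real holes. In step~(i), adding a trap with probability $\theta$ at node $v$ on an uncovered path changes $U_{\sD}$ by roughly $\theta\bigl[(1-FN)(\alpha_{\sD}-\beta_{\sD})+\C_{\sD}(v)\bigr]$ (plus false-positive effects), which is positive only if the detection swing dominates the trapping cost; you flag this as a ``delicate point'' but never discharge it, and the theorem's hypotheses do not grant it. In step~(ii), the claim that a cheaper cut dominates any more expensive cut ``delivering the same covering guarantee'' conflates equal coverage with equal payoff: two different cuts generally expose different nodes to the benign flows (hence different $p_{\sFP}$) and yield different detection probabilities on different attack paths, so the comparison is not cost-only unless $p(\omega)$ and $f(\omega)$ have already been equalized --- which is exactly what the paper's perturbation argument supplies. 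Your property-2 argument is essentially the paper's Lemma~\ref{lem:NEdisj} (two crossings strictly raise $p_{\sT}$, contradicting the adversary's best response) and is sound; the max-flow decomposition you invoke for existence of single-crossing paths is handled more directly in the paper, which exhibits, for each min-cut node $v_i\in\hat{\S}^\*$, a path from $v_i$ to $t_{\sF}$ avoiding $\hat{\S}^\*\setminus\{v_i\}$. But as a proof of the stated theorem, the missing equalization step is a genuine gap, not a detail.
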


We prove Theorem~\ref{thm:mincutNE}  by invoking the property that at NE every player plays a best response against the other players simultaneously. We first present prove the best response results, Lemma~\ref{lem:NEdisj}, Lemma~\ref{lem:NEBR}, and Lemma~\ref{lem:detection_equal}, and then present the proof of  Theorem~\ref{thm:mincutNE}.

Lemma~\ref{lem:NEdisj} gives the best response of the adversary for a given defender strategy that selects the min-cut nodes of $\overbar{\F}$ as the trapping nodes for analyzing the flows.
\begin{lemma}\label{lem:NEdisj}
Let $\Omega_{\D}$ be the set of all attack paths in $\F$. Consider a defender strategy $p_{\sD}$ in which only the min-cut nodes  $ \hat{\S}^\*$ of the flow-network $\overbar{\F}$ are chosen as  trapping nodes with nonzero probability.  Then, the best response of the adversary, $BR(p_{\sD})$, is  to choose the transitions in such a way that all attack paths which has nonzero probability under $BR(p_{\sD})$  pass through exactly one node in $\hat{\S}^\*$.
\end{lemma}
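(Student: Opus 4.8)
The plan is to exploit that, against a defender who places traps only on the min-cut set $\hat{\S}^\*$, the adversary's best response collapses to a pure path-selection problem in which the only relevant quantity is the probability of being trapped. First I would rewrite the adversary's objective. By Eq.~\eqref{eq:Ua1}, $U_{\sA}(p_{\sD},p_{\sA}) = p_{\sT}\,\alpha_{\sA} + (p_{\sR}+p_{\sFP})\,\beta_{\sA}$, and since $p_{\sFP}$ is a function of $p_{\sD}$ and $\pi_{\sB}$ only, it is constant once $p_{\sD}$ is fixed. For a malicious flow committed to an attack path $\omega\in\Omega_{\D}$ that does not drop out, the flow either reaches $t_{\sF}$ or is trapped, so along $\omega$ we have $p_{\sR}=1-p_{\sT}$; substituting gives $U_{\sA}=\beta_{\sA}+p_{\sFP}\beta_{\sA}-p_{\sT}(\beta_{\sA}-\alpha_{\sA})$. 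Because $\beta_{\sA}>0>\alpha_{\sA}$, we have $\beta_{\sA}-\alpha_{\sA}>0$, so maximizing the payoff is equivalent to \emph{minimizing the detection probability} $p_{\sT}$, i.e.\ maximizing the escape probability.

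Next I would express the escape probability of a pure attack path through its intersection with $\hat{\S}^\*$. Since the defender traps only on $\hat{\S}^\*$, detection can occur only when the malicious flow occupies a node of $\omega\cap\hat{\S}^\*$ in a defender-controlled state; by the transition rules~\eqref{eq:transition_D} a single visit to a trapped node $v$ fails to detect with probability $f_v:=1-p_{\sD}(v)(1-FN)$, and these events are independent across distinct nodes, so the escape probability factorizes as $E(\omega)=\prod_{v\in\omega\cap\hat{\S}^\*}f_v$, where $p_{\sD}(v)$ is the node-indexed trapping probability of Eq.~\eqref{eq:Ud1}. As $p_{\sD}(v)>0$ on $\hat{\S}^\*$ and $FN<1$, every factor satisfies $f_v\in(0,1)$. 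Two structural facts about $\hat{\S}^\*$ then drive the argument. (i)~Because $\hat{\S}^\*$ corresponds to a cut of $\overbar{\F}$ (Definition~\ref{def:cut}) through the internal edges $E'_{\G}$, every $s_{\sF}$--$t_{\sF}$ path, hence every $\omega\in\Omega_{\D}$, meets $\hat{\S}^\*$ in at least one node. (ii)~Because $\hat{\S}^\*$ has \emph{minimum} capacity and the capacities on $E'_{\G}$ are positive and finite, no cut node is redundant: for each $v\in\hat{\S}^\*$ the set $\hat{\S}^\*\setminus\{v\}$ fails to be a cut, so some $s_{\sF}$--$t_{\sF}$ path avoids $\hat{\S}^\*\setminus\{v\}$, and by~(i) this path must pass through $v$, giving an attack path that meets $\hat{\S}^\*$ \emph{exactly} at $v$.

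Finally I would run the domination argument. Suppose an attack path $\omega$ meets $\hat{\S}^\*$ in $k\geq 2$ nodes and fix any $u\in\omega\cap\hat{\S}^\*$. Then $E(\omega)=f_u\prod_{v\in(\omega\cap\hat{\S}^\*)\setminus\{u\}}f_v<f_u$ strictly, since every omitted factor is $<1$ and at least one exists. By fact~(ii) there is a once-crossing path $P_u$ through $u$ with $E(P_u)=f_u>E(\omega)$, so $P_u$ yields a strictly smaller $p_{\sT}$ and hence a strictly larger payoff than $\omega$. Therefore no path meeting $\hat{\S}^\*$ more than once can lie in the support of a best response; combining with fact~(i), every attack path played with nonzero probability under $BR(p_{\sD})$ meets $\hat{\S}^\*$ in exactly one node, which is the claim. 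The main obstacle I anticipate is justifying the product form $E(\omega)=\prod_v f_v$ under behavioral (history-dependent) strategies: one must argue that the effective per-node trapping probability is the node-indexed $p_{\sD}(v)$ of~\eqref{eq:Ud1} rather than a genuinely history-dependent rate, and separately dispose of the boundary case in which dropping out (payoff $0$) dominates every attack path, where the statement holds vacuously.
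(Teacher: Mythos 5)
Your proposal is correct and follows essentially the same route as the paper's proof: both observe that $p_{\sFP}$ is constant once $p_{\sD}$ is fixed, both use minimality of the cut (nonzero capacities on $E'_{\G}$) to exhibit, for each $v \in \hat{\S}^\*$, an attack path meeting $\hat{\S}^\*$ only at $v$, and both conclude that any path crossing two or more cut nodes is strictly dominated by such a once-crossing path, so it cannot lie in the support of a best response. Your version merely makes the paper's qualitative comparison (``$p_{\sT}$ is higher and $p_{\sR}$ is lower'') quantitative via the escape-probability product $\prod_{v} f_v$ with $f_v = 1 - p_{\sD}(v)(1-FN)$, and the caveats you flag at the end (the product form under behavioral strategies, and the drop-out action) are details the paper's proof also leaves implicit.
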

\begin{proof}
Consider the payoff function of the adversary, $U_{\sA}(p_{\sD}, p_{\sA}) =(p_{\sT}\, \alpha_{\sA}+ (p_{\sR}+p_{\sFP})\, \beta_{\sA} )$. Here, $p_{\sT}$ and $p_{\sR}$ are functions of $p_{\sD}$ and $p_{\sA}$. However, $p_{\sFP}$ depends only on $p_{\sD}$ and $\pi_{\sB}$. Thus for a given $p_{\sD}, \pi_{\sB}$, the probabilities $p_{\sT}$ and $p_{\sR}$ vary depending on $p_{\sA}$, however, $p_{\sFP}$ is a constant.  We prove the result using a contradiction argument. Suppose the best response of $\P_{\sA}$ is a strategy $p'_{\sA}$ such that  there exists  a path $\hat{\omega} \in \Omega_{\D}$ that passes through two nodes, say $v_i, v_j \in \hat{\S}^\* \subset V_{\G}$, and $\pi_{\sA}( \hat{\omega} )\neq 0$, where $\pi_{\sA}( \hat{\omega} )$ is the probability of attack path $ \hat{\omega} $ under $p'_{\sA}$. Note that $\hat{\S}^\*$ is a min-cut and  $v_i, v_j \in \hat{\S}^\*$. Further,  $\C_{\sD}(v_i)\neq 0$ and $\C_{\sD}(v_j)\neq 0$. Thus there exists at least one directed path, say $ \hat{\omega}'$, from $v_i$ to $t_{\sF}$ that does not pass through any node in $\hat{\S}^\* \setminus \{v_i\}$. Similarly, there exists at least one directed path, say $ \hat{\omega}''$, from $v_j$ to $t_{\sF}$ that does not pass through any node in $\hat{\S}^\* \setminus \{v_j\}$. As per the given defender strategy $p_{\sD}$ the only trapping node in $ \hat{\omega}',  \hat{\omega}''$ is $v_i, v_j$, respectively. Thus there exists a strategy $p_{\sA}$ such that all paths in $\Omega_{\D}$ with nonzero probability under $p_{\sA}$ pass through exactly one node in  $\hat{\S}^\*$ and  $U_{\sA}(p_{\sD}, p_{\sA})>U_{\sA}(p_{\sD}, p'_{\sA})$ (since $p_{\sFP}$ remains same and $p_{\sT}$ is higher and $p_{\sR}$ is lower under $p'_{\sA}$ when compared to the values under strategy $p_{\sA}$). This contradicts the assumption that $p'_{\sA}$  is a best response  and   completes the proof.
\end{proof}

Recall that $ \Omega_{\pazocal D}$ is the set of all source-to-sink paths in the flow-network $\pazocal{F}$ (Definition~IV.1). Note that, any path in  $\pazocal{F}$ that does not belong to $ \Omega_{\pazocal D}$ is not a valid attack, as the attacker cannot reach a destination node.  Every  attack path in $ \Omega_{\pazocal D}$, depending on the defender strategy and the transition structure of the game, induces a  set of paths  in the state space graph of the APT vs. DIFT game. Let $\Omega$ be the set of all paths induced by  $ \Omega_{\pazocal D}$.  In other words, $\Omega$ is the set of all possible state transition paths in the state space graph corresponding to all attack paths in $\pazocal{F}$.

\begin{defn}\label{def:induced_path}
  Let the set of paths induced  in the state space $\bS$ by the attack paths $\Omega_{\D}$ in $\F$ be denoted as $\Omega$. Then the probability of selecting a path $\omega \in \Omega$, denoted by $\pi(\omega)$,  is $\pi(\omega) = \pi_{\sA}(\omega)\pi_{\sB}(\omega)$, where $\pi_{\sA}(\omega)$ is the probability with which an adversary chooses $\omega$ (product of the adversary transition probabilities along path $\omega$) and $\pi_{\sB}(\omega)$ is the probability of benign flows in $\omega$ under distribution $\pi_{\sB}$.
\end{defn}
  
The following result proves  that, under certain conditions, for a given adversary strategy  the best response of the defender is to select one node in every  attack path as a trapping node. 

\begin{lemma}\label{lem:NEBR}
Let $\Omega_{\D}$ be the set of attack paths in $\F$, $\Omega$ be the set of paths in $\bS$ induced by $\Omega_{\D}$, and $p_{\sA}$ be a given strategy of $\P_{\sA}$. For $\omega \in \Omega$,  let $\omega(A)$ denote the set of nodes in $\omega$ corresponding to the adversarial (malicious) flow and $\omega(B)$ denote the set of nodes in $\omega$ corresponding to the benign flows. Also, let $p(\omega)$ denote the probability of detecting the adversary along path $\omega$, i.e., $p(\omega) = \Big[1-\prod_{v_i \in \omega(A)} (1-p_{\sD}(v_i))\Big](1-FN)$. Also, let $f(\omega)$ denote the probability of trapping a benign flow (false-positive), i.e., $f(\omega) = \Big[1-\prod_{v_i \in \omega(B)} (1-p_{\sD}(v_i))\Big]FP$.   If the defender's strategy satisfies the following two conditions:
\begin{enumerate}
\item[(a)] $p(\omega) = p(\omega') $, for all  $\omega, \omega' \in \Omega$ and
\item[(b)] $f(\omega) = f(\omega') $, for all $\omega, \omega' \in \Omega$, 
\end{enumerate}
then the best response of the defender, $BR(p_{\sA})$,  is to select with nonzero probability exactly one node in every $\hat{\omega} \in \Omega_{\D}$ as a trapping node. 
\end{lemma}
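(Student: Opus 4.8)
The plan is to prove the claim through the best-response characterization by establishing two separate properties of an optimal defender strategy: that it traps \emph{at least one} node in every attack path and \emph{at most one} node in every attack path. First I would rewrite the defender's payoff in Eq.~\eqref{eq:Ud1} under the two hypotheses. Using the induced-path decomposition of Definition~\ref{def:induced_path}, the cumulative detection probability expands as $p_{\sT}=\sum_{\omega\in\Omega}\pi(\omega)p(\omega)$ and the false-positive probability as $p_{\sFP}=\sum_{\omega\in\Omega}\pi(\omega)f(\omega)$. Since condition~(a) makes $p(\omega)$ independent of $\omega$ and condition~(b) makes $f(\omega)$ independent of $\omega$, both collapse to the common per-path values weighted by the total path mass $\sum_{\omega}\pi(\omega)$, while $p_{\sR}$ is the complementary mass that the malicious flow reaches a destination. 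Consequently $U_{\sD}$ reduces to a scalar tradeoff that is increasing in the common detection value, decreasing in the common false-positive value, and decreasing in the aggregate resource cost $\sum_{{\bf s}}\sum_{v_i}p_{\sD}(v_i)\C_{\sD}(v_i)$ (recall $\alpha_{\sD}>0$, $\beta_{\sD}<0$, $\C_{\sD}(v_i)<0$). This reformulation is what lets me compare candidate strategies cleanly.

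For the \emph{at least one} direction I would argue by contradiction, mirroring the technique of Lemma~\ref{lem:NEdisj}. Suppose a best response leaves some attack path $\hat{\omega}_0\in\Omega_{\D}$ with no node trapped with nonzero probability; then every state path induced by $\hat{\omega}_0$ satisfies $\prod_{v_i\in\omega(A)}(1-p_{\sD}(v_i))=1$, so $p(\omega)=0$ on those paths. By condition~(a) this forces $p(\omega)=0$ for all $\omega\in\Omega$, hence $p_{\sT}=0$. I then exhibit an alternative strategy that traps exactly one (cheapest) node with nonzero probability along each attack path: because $\alpha_{\sD}>0$, the detection term strictly increases while the resource and false-positive terms remain bounded, so $U_{\sD}$ strictly increases, contradicting optimality.

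For the \emph{at most one} direction I again use contradiction: suppose a best response traps two nodes $v_i,v_j$ with nonzero probability along some $\hat{\omega}\in\Omega_{\D}$. The key observation is that, since DIFT cannot distinguish malicious from benign flows, each trap probability $p_{\sD}(v_i)$ simultaneously contributes to detection (when the malicious flow is at $v_i$) and to the false-positive and resource terms (when a benign flow is at $v_i$). Holding the equalized detection level fixed as required by condition~(a), I would re-allocate the trapping probability onto a single node of $\hat{\omega}$ so that the per-path detection $\big[1-\prod_{v_i\in\omega(A)}(1-p_{\sD}(v_i))\big](1-FN)$ is preserved; this strictly lowers the aggregate resource cost and the false-positive exposure of the removed trap, strictly increasing $U_{\sD}$. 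This contradicts the best-response assumption, so at most one node per attack path is trapped. Combining the two directions yields exactly one node per $\hat{\omega}\in\Omega_{\D}$.

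I expect the main obstacle to be the \emph{at most one} step, specifically making the re-allocation rigorous despite the coupling: removing a trap on $\hat{\omega}$ alters not only the detection term but also, through nodes shared across state paths, the false-positive contributions $f(\omega)$ of the benign flows, so I must verify that the consolidation can be carried out while respecting both equalization constraints~(a) and~(b) and without inadvertently raising $p_{\sR}+p_{\sFP}$. Tracking these simultaneous effects of a single $p_{\sD}(v_i)$ on detection, false-positives, and cost is the delicate part of the argument.
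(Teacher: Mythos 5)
Your core argument is the same as the paper's: expand $U_{\sD}$ over the induced paths, use conditions (a) and (b) together with $\sum_{\omega\in\Omega}\pi(\omega)=1$ to pull the common values $p(\omega)$ and $f(\omega)$ out of the sum, and then show that trapping more than one node on an attack path wastes resource cost, because a single node achieves the same per-path detection probability with strictly less total trapping probability (since $1-(1-p_1)(1-p_2)<p_1+p_2$ and $\C_{\sD}(\cdot)<0$). The paper's proof is exactly this comparison between an ``exactly one node per path'' strategy and a ``more than one node on some path'' strategy, and, like you, it treats the coupling at shared nodes between detection, false-positive, and cost terms informally; so your ``at most one'' direction is at the same level of rigor as the published proof.

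The gap is in your ``at least one'' direction. You claim that adding one trapped node per attack path strictly increases $U_{\sD}$ because $\alpha_{\sD}>0$ and ``the resource and false-positive terms remain bounded.'' Bounded is not enough: both of those terms enter the payoff with strictly negative coefficients ($\C_{\sD}(v_i)<0$, $\beta_{\sD}<0$) and they scale linearly in the added trapping probability exactly as the detection gain does, so the sign of the net change is governed by a quantity of the form $(1-FN)(\alpha_{\sD}-\beta_{\sD})+\C_{\sD}(v_i)+FP\,\beta_{\sD}$ (up to path weights), which is parameter-dependent; for an expensive node or a small reward gap $\alpha_{\sD}-\beta_{\sD}$, trapping nothing can dominate, and shrinking the added probability $\epsilon$ does not help since all terms are linear in $\epsilon$. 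Notably, the paper avoids this issue only by proving less than the lemma states: its proof establishes only that no best response traps more than one node per path, and the ``at least one'' half is never argued (dominance assumptions such as $(\alpha_{\sD}-\beta_{\sD})\gg|\beta_{\sD}|$ appear only later, in the proof of Theorem~\ref{th:NEmatrix}). So your proposal is more ambitious in scope than the paper's own proof, but to close it you would need an explicit assumption that the detection reward dominates the per-node analysis cost and the false-positive penalty.
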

\begin{proof}
Let $\pi(\omega)$ denote the probability of a path $\omega \in \Omega$ under the given strategy $p_{\sA}$ and benign distribution $\pi_{\sB}$. For a path $\omega$ in the state space $\bS$,  $\omega(A)$ denotes the set of nodes in $\omega$ corresponding to the adversarial (malicious) flow and $\omega(B)$ denotes the set of nodes in $\omega$ corresponding to the benign flows.  Let $p(\omega)$ denote the probability of detecting the adversary along path $\omega$, i.e., $p(\omega) = \Big[1-\prod_{v_i \in \omega(A)} (1-p_{\sD}(v_i))\Big](1-FN)$. Also, let $f(\omega)$ denote the probability of trapping a benign flow, i.e., false-positive.  Then $f(\omega) = \Big[1-\prod_{v_i \in \omega(B)} (1-p_{\sD}(v_i))\Big]FP$. The defender's payoff 
\begin{eqnarray*}
U_{\sD}(p_{\sD}, p_{\sA}) &=& \sum_{\omega \in \Omega}\pi(\omega)\Big[p(\omega)\,\alpha_{\sD} + (1-p(\omega) +f(\omega))\beta_{\sD} \Big]\\
&+&\sum_{ \omega \in \Omega}\Big(  \sum_{v_i \in \omega}p_{\sD}(v_i)\C_{\sD}(v_i)\Big)
\end{eqnarray*}
Given $p(\omega)$'s and $f(\omega)$'s are equal for all $\omega \in \Omega$. Thus 
\begin{eqnarray*}
U_{\sD}(p_{\sD}, p_{\sA})&=& \hspace*{-3 mm}\Big[p(\omega)\,(\alpha_{\sD}-\beta_{\sD}) + (1+f(\omega))\beta_{\sD} \Big] \Big(\sum_{\omega \in \Omega}\pi(\omega)\Big)\nonumber\\
&+&  \sum_{ \omega \in \Omega}\Big(  \sum_{v_i \in \omega}p_{\sD}(v_i)\C_{\sD}(v_i)\Big)
\end{eqnarray*}
\begin{equation}\label{eq:add}
= \Big[p(\omega)\,\alpha_{\sD} + (1-p(\omega)+f(\omega))\beta_{\sD} \Big] + \sum_{ \omega \in \Omega}\Big(  \sum_{v_i \in \omega}p_{\sD}(v_i)\C_{\sD}(v_i)\Big)
\end{equation}
Eq.~\eqref{eq:add} holds as $\sum_{\omega \in \Omega}\pi(\omega)=1$.
 Consider a defender strategy $ p_{\sD}$  in which exactly one node in every $\hat{\omega} \in \Omega_{\D}$ is chosen as the trapping node. Assume that the defender strategy is modified to $p'_{\sD}$ such that more than one node in some path are chosen as trapping nodes. This variation updates the probabilities of nodes in a set of paths in $\Omega$. 
Note that, due to the constraints on $p(\omega)$ and $f(\omega)$ (conditions~(a) and~(b)), the defender's probabilities (strategy) at two nodes in  a path are dependent. Hence for all paths $\omega\in \Omega$ whose probabilities are modified in $p'_{\sD}$,  $\sum_{v_i \in \omega} p'_{\sD}(v_i)\C_{\sD}(v_i) < \sum_{v_i \in \omega} p_{\sD}(v_i)\C_{\sD}(v_i)$. This holds as the probability in the single node case is less than the sum of the probabilities of more than one node case as the events are dependent and the sum of the values of $\C_{\sD}(\cdot)<0$  are also minimum (since min-cut). This implies 
\begin{equation}\label{eq:BR_D1}
\sum_{ \omega \in \Omega}\Big( \sum_{v_i\in V_{\G}}p'_{\sD}(v_i)\C_{\sD}(v_i) \Big) < \sum_{ \omega \in \Omega}\Big( \sum_{v_i\in V_{\G}}p_{\sD}(v_i)\C_{\sD}(v_i) \Big). 
\end{equation}
From Eq.~\eqref{eq:add} and by conditions~(a) and~(b), we get   $U_{\sD}(p'_{\sD}, p_{\sA}) < U_{\sD}(p_{\sD}, p_{\sA})$. Therefore, $p'_{\sD}$ is not a best response for the defender and  no best response of the defender has more than one node chosen as trapping node, if $p(\omega)$'s and $f(\omega)$'s  are equal for all $\omega \in \Omega$. 
\end{proof}

The result below proves that if Lemma~\ref{lem:NEBR} holds, then for a given adversary strategy the  best response of the defender is to select the min-cut nodes of $\overbar{\F}$ as trapping nodes.
\begin{lemma}\label{lem:detection_equal}
Let $\Omega_{\D}$ be the set of attack paths in $\F$. Assume that the defender's strategy satisfies conditions~(a) and~(b) in  Lemma~\ref{lem:NEBR} for a given adversary strategy $p_{\sA}$. Then  the best response of the defender, $BR(p_{\sA})$,  is to select with nonzero probability a  min-cut node set of the flow-network $\overbar{\F}$ as the trapping nodes.
\end{lemma}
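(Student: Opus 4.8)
The plan is to extend Lemma~\ref{lem:NEBR}, which already shows that under conditions~(a) and~(b) every defender best response selects with nonzero probability exactly one node on each attack path $\hat{\omega}\in\Omega_{\D}$. First I would record that a node set meeting every source-to-sink path of $\F$ in exactly one node is a node cut-set separating $s_{\sF}$ from $t_{\sF}$; through the node-splitting construction of $\overbar{\F}$ and the capacities in Eq.~\eqref{eq:capa}, such a cut-set corresponds to the finite-capacity edge cut $\{(v_i,v'_i): v_i\in\hat{\S}\}\subseteq E'_{\G}$ of $\overbar{\F}$. Thus Lemma~\ref{lem:NEBR} already confines the support of any best response to the family of node cut-sets of $\F$, and it remains only to single out the cost-optimal member of this family.

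Next I would isolate the term of the payoff that discriminates among cut-sets. From Eq.~\eqref{eq:add}, under conditions~(a) and~(b) the defender's payoff splits into a detection/false-positive reward $p(\omega)\alpha_{\sD}+(1-p(\omega)+f(\omega))\beta_{\sD}$, governed only by the common per-path probabilities, and the resource term $\sum_{\omega\in\Omega}\sum_{v_i\in\omega}p_{\sD}(v_i)\C_{\sD}(v_i)$. For a cut-set the per-path detection probability at its single trapping node equals $p_{\sD}(v_i)(1-FN)$, while the contribution to the (negative) resource term scales with $p_{\sD}(v_i)\,\C_{\sD}(v_i)$. Consequently, to realize any fixed detection level the defender incurs least cost by choosing the cut-set of minimum total $|\C_{\sD}|$; since this holds uniformly over all detection levels, the optimal level attained on that cut-set dominates every competing cut-set, and the support of the best response must be the minimum-cost cut-set.

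I would then close the loop with the capacity construction. By Eq.~\eqref{eq:capa} all edges of $\overbar{\F}$ outside $E'_{\G}$ have infinite capacity, while $(v_i,v'_i)\in E'_{\G}$ carries capacity $\C_{\sD}(v_i)$; hence the capacity of the edge cut induced by a node cut-set equals exactly the total security-analysis cost at its nodes. The minimum-cost cut-set therefore coincides with the minimum-capacity cut $\kappa(\hat{\S}^\*)$ of $\overbar{\F}$, whose node set $\hat{\S}^\*$ is given in Eq.~\eqref{eq:mincut}. A short contradiction finishes the argument: were a best response supported on a cut-set $\hat{\S}\neq\hat{\S}^\*$ of strictly larger capacity, moving the support to $\hat{\S}^\*$ while matching the detection and false-positive levels would raise the resource term and thus the payoff, contradicting optimality.

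The step I expect to be the main obstacle is the decoupling in the middle paragraph: rigorously arguing that the reward term can be held common across cut-sets so that the comparison is decided purely by cost. This needs a careful check that every candidate cut-set can realize the same optimal per-path detection and false-positive probabilities (keeping conditions~(a) and~(b) intact), and that the path-summed cost expression in Eq.~\eqref{eq:add} is monotone in the cut capacity $\sum_{v_i\in\hat{\S}}|\C_{\sD}(v_i)|$ on $\overbar{\F}$, so that minimizing one indeed minimizes the other.
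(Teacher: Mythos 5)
Your proposal follows essentially the same route as the paper's proof: Lemma~\ref{lem:NEBR} confines any best response to node cut-sets of $\F$, conditions~(a) and~(b) decouple the payoff into a common detection/false-positive reward plus the resource-cost term, and the node-splitting capacity construction of $\overbar{\F}$ turns the cost comparison into a cut-capacity comparison, so a best response supported on anything other than $\hat{\S}^\*$ would contradict the minimality of $\kappa(\hat{\S}^\*)$. The decoupling step you flag as the main obstacle is in fact passed over just as tersely in the paper, which directly asserts that a deviating best response would produce a cut-set of strictly smaller capacity.
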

\begin{proof}
Under conditions~(a) and~(b) in Lemma~\ref{lem:NEBR}, the best response of the defender is to select   one node in every attack path as trapping node. Note that all attack paths in $\F$ pass through some node in the min-cut node set $\hat{\S}^\*$. By selecting the nodes in  $\hat{\S}^\*$ as trapping nodes, all possible attack paths have some nonzero probability of getting detected. We  prove  the result using a contradiction argument. Suppose that the best response of the defender is not to select the nodes in $\hat{\S}^\*$ as trapping nodes. Then, there exists a subset of nodes $\hat{\S} \subset V_{\G}$ such that $\sum_{v_i \in \hat{\S}} \C_{\sD}(v_i) < \sum_{v_j \in \hat{\S}^\*} \C_{\sD}(v_j) $. Further, all possible attack paths in $\F$ pass through some node in $\hat{\S}$. Then,  $\hat{\S}$ is a (source-sink)-cut-set and let $\kappa(\hat{\S}) := \{(v_i, v'_i): v_i \in \hat{\S}\}$. Then, $\kappa(\hat{\S})$ is a cut set and $c_{\sF}(\kappa(\hat{\S})) < c_{\sF} (\kappa(\hat{\S}^\*))$. This contradicts the fact that $\kappa(\hat{\S}^\*)$ is an optimal solution to the (source-sink)-min-cut problem.  Hence under Lemma~\ref{lem:NEBR} the best response of the defender is to select the nodes in $\hat{\S}^\*$ as trapping nodes. 
\end{proof}

Using Lemma~\ref{lem:NEdisj}, Lemma~\ref{lem:NEBR}, and Lemma~\ref{lem:detection_equal},  we present below the proof of Theorem~\ref{thm:mincutNE}.

\noindent{\em Proof~of~Theorem~\ref{thm:mincutNE}}: Consider a defender's strategy that selects the min-cut nodes as the trapping nodes. Then, by Lemma~\ref{lem:NEdisj} the best response of the adversary is to select transitions in such a way that all attack paths with nonzero probability pass  through exactly one node in the min-cut. Further
 Let $\Omega_{\D}$ be the set of attack paths in $\F$ and $\Omega$ be the set of paths in $\bS$ induced by $\Omega_{\D}$. Lemma~\ref{lem:detection_equal}  concludes that the best response of the defender is to select the  min-cut nodes of  $\F$ as trapping nodes, provided the probability of detecting the adversary and the probability of trapping a benign flow are equal for all $\omega \in \Omega$.  This implies that, if NE strategy pair satisfy the conditions that $p(\omega)$'s and $f(\omega)$'s  are equal for all $\omega \in \Omega$, then  the defender's strategy at NE is to select the min-cut nodes as the trapping nodes  and the adversary's strategy is to choose an attack path such that it passes through exactly one node  in the min-cut node set.  By Proposition~\ref{prop:NE}, there exists an NE for ${\bf G}$. Consequently, if $p(\omega)$'s and $f(\omega)$'s are equal at NE, the proof follows.
 
 Let  $(p_{\sA}, p_{\sD})$ be an NE of ${\bf G}$.  Consider a unilateral deviation in the strategy of the adversary. Let $\pi(\omega)$'s for $\omega \in \Omega$ are modified due to change in transition probabilities of the adversary such that the  updated  probabilities of the attack paths are $\pi(\omega_i)+\epsilon_i$, for $i=1, \ldots, |\Omega|$. Here, $\epsilon_i$'s can take positive values, negative values or zero such that $\sum_{i=1}^{|\Omega|}\epsilon_i = 0$.  Consider two arbitrary paths, say $\omega_1 \in \Omega$ and $\omega_2 \in \Omega$, such that a unilateral change in the adversary strategy changes $\pi(\omega_1)$ and $\pi(\omega_2)$ and the probabilities of the other paths remain unchanged. Without loss of generality, assume that $\pi(\omega_1)$ increases by $\epsilon$ while   $\pi(\omega_2)$ decreases by $\epsilon$ and all other $\pi(\omega)$'s remain the  same. As $(p_{\sD}, p_{\sA})$ is an NE, $(\pi(\omega_1) + \epsilon)\Big( p(\omega_1)(\alpha_{\sA}-\beta_{\sA}) + \beta_{\sA} +f(\omega_1)\,\beta_{\sA}\Big) + (\pi(\omega_2) - \epsilon)\Big( p(\omega_2)(\alpha_{\sA}-\beta_{\sA}) + \beta_{\sA} +f(\omega_2)\,\beta_{\sA}\Big) \leqslant \pi(\omega_1) \Big( p(\omega_1)(\alpha_{\sA}-\beta_{\sA}) + \beta_{\sA} +f(\omega_1)\,\beta_{\sA}\Big) + \pi(\omega_2)\Big( p(\omega_2)(\alpha_{\sA}-\beta_{\sA}) + \beta_{\sA} +f(\omega_2)\,\beta_{\sA}\Big)$. Thus 
\begin{equation}\label{eq:one}
(p(\omega_1)-p(\omega_2))(\alpha_{\sA}-\beta_{\sA}) +(f(\omega_1)-f(\omega_2))\beta_{\sA} \leqslant 0.
\end{equation}
Now consider another unilateral deviation of adversary strategy such that $\pi(\omega_1)$ decreases by $\epsilon$ while   $\pi(\omega_2)$ increases by $\epsilon$ and all other $\pi(\omega)$'s remain the  same. This gives 
\begin{equation}\label{eq:two}
(p(\omega_1)-p(\omega_2))(\alpha_{\sA}-\beta_{\sA}) +(f(\omega_1)-f(\omega_2))\beta_{\sA} \geqslant 0.
\end{equation}
Eqs.~\eqref{eq:one} and~\eqref{eq:two} imply  
\begin{equation}\label{eq:three}
(p(\omega_1)-p(\omega_2))(\alpha_{\sA}-\beta_{\sA}) +(f(\omega_1)-f(\omega_2))\beta_{\sA} = 0.
\end{equation}
Note that $(\alpha^{\sA} - \beta^{\sA}) <0$ and $\beta_{\sA} >0$. Therefore, there are three possible cases where Eq.~\eqref{eq:three} holds. 
\begin{enumerate}
\item[(i)] $(p(\omega_1)-p(\omega_2)) > 0$ and $(f(\omega_1)-f(\omega_2)) > 0$,
\item[(ii)] $(p(\omega_1)-p(\omega_2)) < 0$ and $(f(\omega_1)-f(\omega_2)) < 0$, and
\item[(iii)] $(p(\omega_1)-p(\omega_2)) = 0$ and $(f(\omega_1)-f(\omega_2))= 0$.
\end{enumerate}
Consider case~(i). For a path $\omega$ in the state space  let $\omega(A)$ denote the set of nodes in $\omega$ corresponding to the adversarial flow and $\omega(B)$ denote the set of nodes in $\omega$ corresponding to the benign flows.  Rewriting$(p(\omega_1)-p(\omega_2)) > 0$, we get 
{\scalefont{0.9}{ 
\begin{equation}\label{eq:case_one_1}
\Big(\Big[1-\prod_{v_i \in \omega_1(A)} (1-p_{\sD}(v_i))\Big]- \Big[1-\prod_{v_j \in \omega_2(A)} (1-p_{\sD}(v_j))\Big]\Big)(1-FN)> 0
\end{equation}
}}
Similarly rewriting$(f(\omega_1)-f(\omega_2)) > 0$, we get 
{\scalefont{0.9}{  
\begin{equation}\label{eq:case_one_2}
\Big(\Big[1-\prod_{v_i \in \omega_1(B)} (1-p_{\sD}(v_i))\Big]- \Big[1-\prod_{v_j \in \omega_2(B)} (1-p_{\sD}(v_j))\Big]\Big)FP > 0
\end{equation}
}}
As $0 < FP <1$ and $0 < FN < 1$, Eqs.~\eqref{eq:case_one_1} and~\eqref{eq:case_one_2} imply 
\begin{eqnarray}
\Big[\prod_{v_i \in \omega_2(A)} (1-p_{\sD}(v_i)) -\prod_{v_j \in \omega_1(A)} (1-p_{\sD}(v_i))\Big]> 0\label{eq:case_one_3}\\
\Big[\prod_{v_i \in \omega_2(B)} (1-p_{\sD}(v_i)) -\prod_{v_j \in \omega_1(B)} (1-p_{\sD}(v_j))\Big]> 0\label{eq:case_one_4}
\end{eqnarray}
Eqs.~\eqref{eq:case_one_3} and~\eqref{eq:case_one_4} imply 
\begin{eqnarray}
\prod_{v_i \in \omega_2(A)} (1-p_{\sD}(v_i)) &>& \prod_{v_j \in \omega_1(A)} (1-p_{\sD}(v_i)) \mbox{~and}\label{eq:case_one_5}\\
\prod_{v_i \in \omega_2(B)} (1-p_{\sD}(v_i)) &>& \prod_{v_j \in \omega_1(B)} (1-p_{\sD}(v_j))\label{eq:case_one_6}
\end{eqnarray}
Note that at every state in ${\bf s} \in \bS$ with ${\bf s}=\{v_{i_1}, \ldots, v_{i_k} \}$ $0 \leqslant \sum_{v_{i_k} \in {\bf s}}p_{\sD}(v_{i_k}) \leqslant 1$. Thus Eqs.~\eqref{eq:case_one_5} and~\eqref{eq:case_one_6} cannot be together satisfied. Thus case~(i) does not hold at an NE. Following the similar arguments one can show that  case~(ii) also does not hold at an NE. This implies  $(p(\omega_1)-p(\omega_2)) = 0$ and $(f(\omega_1)-f(\omega_2))= 0$.

Since $\omega_1$ and $\omega_2$ are arbitrary, one can show that for the general case 
\begin{equation}\label{eq:GlobalUa}
\sum_{i=1}^{|\Omega|}\epsilon_ip(\omega_i) = 0 \mbox{~and~} \sum_{i=1}^{|\Omega|}\epsilon_i f(\omega_i) = 0.
\end{equation}

Eq.~\eqref{eq:GlobalUa} should hold for all possible values of $\epsilon_i$'s satisfying 
$\sum_{i=1}^{|\Omega|}\epsilon_i = 0.$ This gives  $p(\omega_i) = p(\omega_j)$ and $f(\omega_i) = f(\omega_j)$, for all $i, j \in \{1, \ldots , |\Omega|\}$, at NE. This completes the proof.
\qed

Theorem~\ref{thm:mincutNE} concludes that the set of all solutions to the min-cut problem characterizes the set of NE of game ${\bf G}$. Two key properties of an equilibrium strategy pair is that defender chooses the min-cut nodes as trapping nodes and indeed with equal probability, which is proved in Lemma~\ref{lem:P_d_equal}. 


\begin{lemma}\label{lem:P_d_equal}
Consider the APT vs. DIFT game ${\bf G}$ and let $(p_{\sA}, p_{\sD})$  be an NE of ${\bf G}$. Then under $p_{\sD}$ all the min-cut nodes of the flow-network have equal probability of selecting as a trapping node.
\end{lemma}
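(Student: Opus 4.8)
The plan is to convert the path-level equality established inside the proof of Theorem~\ref{thm:mincutNE} into a node-level equality. I would start from two facts from that theorem: at any NE the defender assigns nonzero trapping probability only to the min-cut nodes $\hat{\S}^\*$, so $p_{\sD}(v)=0$ whenever $v \notin \hat{\S}^\*$; and the adversary routes every attack path so that its adversarial flow meets $\hat{\S}^\*$ in exactly one node. In addition, the proof of Theorem~\ref{thm:mincutNE} shows that at NE the detection probabilities are path-independent, i.e. $p(\omega)=p(\omega')$ for all $\omega,\omega' \in \Omega$, where $p(\omega)$ is the quantity from Lemma~\ref{lem:NEBR}. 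These three facts are exactly the ingredients needed.

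First I would specialize the formula $p(\omega)=\big[1-\prod_{v_i \in \omega(A)}(1-p_{\sD}(v_i))\big](1-FN)$ to the NE. Since $\omega(A)$ crosses the min-cut in a single node, call it $v_m$, and since $p_{\sD}$ vanishes off $\hat{\S}^\*$, every factor $(1-p_{\sD}(v_i))$ with $v_i \neq v_m$ equals $1$. The product therefore collapses to $(1-p_{\sD}(v_m))$, giving $p(\omega)=p_{\sD}(v_m)(1-FN)$; that is, the detection probability along a path is governed entirely by the unique min-cut node it crosses.

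Next, for any two distinct min-cut nodes $v_m,v_{m'} \in \hat{\S}^\*$, I would select induced paths $\omega,\omega' \in \Omega$ whose adversarial flows cross the cut exactly at $v_m$ and $v_{m'}$ respectively. Such paths exist because $\hat{\S}^\*$ is a min-cut: by minimality, $\hat{\S}^\* \setminus \{v_m\}$ fails to separate $s_{\sF}$ from $t_{\sF}$, so there is a source-to-sink path avoiding $\hat{\S}^\* \setminus \{v_m\}$; being a path across a separator it must still meet $\hat{\S}^\*$, hence it crosses the min-cut precisely at $v_m$. Applying $p(\omega)=p(\omega')$ together with the collapsed form then yields $p_{\sD}(v_m)(1-FN)=p_{\sD}(v_{m'})(1-FN)$, and since $0<FN<1$ forces $1-FN\neq 0$, I conclude $p_{\sD}(v_m)=p_{\sD}(v_{m'})$. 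As $v_m,v_{m'}$ are arbitrary, all min-cut nodes are selected with equal probability.

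I expect the only delicate point to be justifying the collapse of the product, which requires invoking both conclusions of Theorem~\ref{thm:mincutNE} simultaneously: that the support of $p_{\sD}$ lies in $\hat{\S}^\*$ and that each attack path intersects $\hat{\S}^\*$ in a single node, so that $\omega(A)\cap \hat{\S}^\*$ is a singleton and exactly the desired factor survives. The accompanying combinatorial claim—that each min-cut node admits a single-crossing attack path—is standard for minimum cuts and follows from the minimality argument above, mirroring the reasoning already used in Lemma~\ref{lem:detection_equal}.
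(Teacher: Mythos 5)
Your proof is correct and takes essentially the same route as the paper's: invoke Theorem~\ref{thm:mincutNE} to get path-independence of $p(\omega)$, support of $p_{\sD}$ on $\hat{\S}^\*$, and single-crossing attack paths; collapse $p(\omega)=\big[1-\prod_{v_i\in\omega(A)}(1-p_{\sD}(v_i))\big](1-FN)$ to $p_{\sD}(v_m)(1-FN)$; and equate across paths. The only differences are cosmetic: you explicitly prove existence of a single-crossing path through each min-cut node via minimality of the cut (a step the paper leaves implicit, and which actually mirrors the argument in the proof of Lemma~\ref{lem:NEdisj} rather than Lemma~\ref{lem:detection_equal}).
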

\begin{proof}
At NE, all the paths in the state space have equal probability of getting detected, i.e., $p(\omega)$'s are same for all $\omega \in \Omega$ (by Theorem~\ref{thm:mincutNE}).  For a path $\omega \in \Omega$ with $\omega(A)$ denoting the set of nodes corresponding to the adversarial flow, we know
\begin{equation}\label{eq:p}
p(\omega) = \Big[1-\prod_{v_i \in \omega(A)} (1-p_{\sD}(v_i))\Big](1-FN), \mbox{~for~all~} \omega \in \Omega.
\end{equation}
Also, there is exactly one node corresponding to an attack path  that has nonzero probability of selecting as a trapping node (Theorem~\ref{thm:mincutNE}). Hence in set $\omega(A)$ exactly one node, say $v_i \in \omega(A)$,  has nonzero value of $p_{\sD}$. Eq.~\eqref{eq:p} hence implies that at NE all the min-cut nodes of the flow-network have equal probability of selecting as a trapping node.
\end{proof}

As a consequence of Theorem~\ref{thm:mincutNE} and Lemma~\ref{lem:P_d_equal}, the following corollary holds.
\begin{cor}\label{cor:equal}
Let $ \hat{\S}^\*$ be a set of min-cut nodes of the flow-network $\overbar{\F}$. Then, at an NE of the game ${\bf G}$, the  defender's strategy is to choose all the nodes in $ \hat{\S}^\*$ as trapping nodes with \underline{equal probability} and the adversary's strategy is to choose transitions in such a way that each attack path passes through exactly one node in $ \hat{\S}^\*$. 
\end{cor}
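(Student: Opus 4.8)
The plan is to obtain this corollary not through any fresh argument but by assembling two ingredients that are already in hand, namely Theorem~\ref{thm:mincutNE} and Lemma~\ref{lem:P_d_equal}; the existence of at least one NE, guaranteed by Proposition~\ref{prop:NE}, ensures the statement is nonvacuous. Accordingly, my first step is simply to invoke Theorem~\ref{thm:mincutNE}: its Part~1) already asserts that at every NE the defender's strategy $p_{\sD}$ places all nodes of the min-cut set $\hat{\S}^\*$ of $\overbar{\F}$ among the trapping nodes, and its Part~2) asserts that the adversary's strategy $p_{\sA}$ routes transitions so that every attack path in $\Omega_{\D}$ carrying nonzero probability crosses exactly one node of $\hat{\S}^\*$. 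These two conclusions reproduce two of the three assertions of the corollary verbatim.

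The second step is to upgrade the qualitative conclusion ``all min-cut nodes are chosen as trapping nodes'' into the quantitative one ``all min-cut nodes are chosen with equal probability'' by appealing to Lemma~\ref{lem:P_d_equal}, which states precisely that at an NE every min-cut node $v_i \in \hat{\S}^\*$ carries the same selection probability $p_{\sD}(v_i)$. Substituting this equal-probability conclusion into Part~1) of the theorem yields the claimed defender strategy, while Part~2) carries over unchanged to give the adversary's routing property. This completes the packaging, and since no new probabilities or payoff expansions need to be computed, the argument is essentially a one-line synthesis once both prior results are cited.

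Because there is no new computation, the only point I expect to require care is confirming mutual consistency: that a single equilibrium can simultaneously realize ``the defender traps all min-cut nodes with equal probability'' and ``each attack path meets exactly one min-cut node.'' This is exactly what Lemma~\ref{lem:P_d_equal} secures. Since, by Part~2) of Theorem~\ref{thm:mincutNE}, each $\omega$ contains exactly one min-cut node along its adversarial segment $\omega(A)$, the detection probability $p(\omega)=\big[1-\prod_{v_i \in \omega(A)}(1-p_{\sD}(v_i))\big](1-FN)$ collapses to a single-node expression, and the equalization of $p(\omega)$ across all $\omega \in \Omega$ at NE (Theorem~\ref{thm:mincutNE}) then forces a common value of $p_{\sD}(v_i)$ over the min-cut set. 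Thus the main (and only) obstacle is the bookkeeping check that there is no circular dependence between the cited results; there is none, because Lemma~\ref{lem:P_d_equal} is itself derived from the NE characterization of Theorem~\ref{thm:mincutNE}, and the corollary merely restates their combined content in a single, self-contained form.
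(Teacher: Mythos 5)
Your proposal is correct and takes essentially the same approach as the paper: both prove the corollary by combining Theorem~\ref{thm:mincutNE} (min-cut trapping nodes for the defender, one-min-cut-node attack paths for the adversary) with Lemma~\ref{lem:P_d_equal} (equal selection probabilities). The only cosmetic difference is that the paper also writes out the adversary's payoff $U_{\sA}(p_{\sD},p_{\sA})$ and uses the equality of $p(\omega)$ and $f(\omega)$ across all $\omega \in \Omega$ at NE to note that \emph{any} path (or mixture of paths) through exactly one min-cut node is optimal for the adversary, whereas you inherit the adversary's property directly from Part~2) of the theorem.
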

\begin{proof}
By Theorem~\ref{thm:mincutNE} and Lemma~\ref{lem:P_d_equal} the defender's strategy at NE is to select min-cut nodes as trapping nodes with equal probability. We know adversary's payoff
\begin{eqnarray*}
U_{\sA}(p_{\sD}, p_{\sA}) &=& \sum_{\omega \in \Omega}\pi(\omega)\Big[p(\omega)\,\alpha_{\sA} + (1-p(\omega) +f(\omega))\beta_{\sA} \Big],
\end{eqnarray*}
and at NE $p(\omega)$'s and $f(\omega)$'s are equal for all $\omega \in \Omega$. Thus the adversary's optimal strategy is to select any path (or set of paths if mixed)  that passes through exactly one min-cut node. 
\end{proof}
Using Theorem~\ref{thm:mincutNE} and Corollary~\ref{cor:equal}  we present below an approach to compute an equilibrium strategy pair of ${\bf G}$.
 Firstly, we solve the node version of the (source-sink)-min-cut problem on $\F$. Let an optimal solution be $\kappa(\hat{\S}^\*)$ and the corresponding vertex set  be $\hat{\S^\*} = \{\tilde{v}_{1}, \ldots, \tilde{v}_r  \}$, where $\hat{\S^\*} := \{v_i: (v_i, v'_i) \in \kappa(\hat{\S}^\*)\}$.  
By Theorem~\ref{thm:mincutNE}, at NE the defender only selects the nodes in $\hat{\S^\*}$ as trapping nodes, with equal probability, and the adversary chooses transitions such that each attack path passes through only one  node in $\hat{\S^\*}$. Therefore, the  attack paths  chosen by the adversary is characterized by the nodes in $\hat{\S^\*}$. The set of paths in the state space ${\bf S}$ can also be  characterized by the nodes in $\hat{\S^\*}$. In other words, the set of paths in the set $\Omega$ can be grouped such that each group corresponds to a set of paths in $\Omega$ in which attack path (not necessarily benign flows) passes through exactly one node in $\hat{\S^\*}$, i.e., at least one among the $W$ tagged flows passes through exactly one min-cut node.   We denote the set of paths in $\Omega$ that correspond to node $\tilde{v}_{i} \in \hat{\S^\*}$ as  $\Omega(\tilde{v}_{i})$. By Corollary~\ref{cor:equal} any distribution over paths in $\Omega(\tilde{v}_{i})$,  for $i\in \{1,\ldots, r\}$ is an optimal strategy for $\P_{\sA}$.

\begin{rem}\label{rem:path}
At NE, all attack paths in $\F$ with nonzero probability  pass through  exactly one node in the min-cut node set $\hat{\S^\*} = \{\tilde{v}_{1}, \ldots, \tilde{v}_r\}$. Further, the probabilities of selecting them as trapping nodes are also equal (since $p(\omega)$'s are equal).  Thus, without loss of generality, one can characterize the action space of the adversary as the set of  attack paths in $\F$ that pass through $\hat{\S^\*}$, disjoint with respect to nodes in $\hat{\S^\*}$, and the adversary strategizes over this set of  paths. 
\end{rem}

In the theorem below, we derive a closed-form expression for the optimal defender strategy for the game ${\bf G}$.
\begin{theorem}\label{th:NEmatrix}
Consider the APT vs. DIFT game ${\bf G}$. At NE the defender selects all the min-cut nodes as trapping nodes with probability $\dfrac{1}{\min\{ W, r\}}$, where $W, r$ respectively denote the number of tagged flows at a time in the system and the cardinality of the min-cut nodes.
\end{theorem}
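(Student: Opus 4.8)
The plan is to start from the two structural facts already established for any NE $(p_{\sA}, p_{\sD})$: by Corollary~\ref{cor:equal} the defender selects every min-cut node with the \emph{same} probability, which I denote $p$, and by Theorem~\ref{thm:mincutNE} each attack path passes through exactly one node of $\hat{\S}^\*$. Given these, it remains only to pin down the numerical value of $p$, and I would do this by combining a feasibility (budget) constraint at each state with the defender's incentive to maximize detection. So the argument is short and mostly counting.

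First I would write down the feasibility constraint at an arbitrary defender-controlled state $s_t = (D, v_{i_1}, \ldots, v_{i_W})$: the rules of ${\bf G}$ permit security analysis on at most one flow, so the trapping probabilities assigned to the flows present must sum to at most $1$. I would then bound the number of flows that can simultaneously sit on min-cut nodes. Because the $W$ flows always occupy \emph{distinct} nodes (as argued in the proof of Lemma~\ref{lem:cardinality_SS}) and there are only $r$ distinct min-cut nodes in $\hat{\S}^\*$, at most $\min\{W, r\}$ of the flows can lie on min-cut nodes at any single state. This combinatorial count is the crux that produces the value $\min\{W,r\}$.

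Next, at a state attaining this maximum, the equal-probability property forces each of the $\min\{W,r\}$ min-cut flows to be trapped with the common probability $p$, and feasibility then yields $\min\{W, r\}\, p \leqslant 1$, i.e. $p \leqslant \tfrac{1}{\min\{W,r\}}$. To obtain equality I would invoke the defender's best-response incentive: since $\alpha_{\sD} > 0$ and $\beta_{\sD} < 0$, raising $p$ raises the detection probability $p_{\sT}$ (increasing the reward term $p_{\sT}\,\alpha_{\sD}$) and lowers the evasion probability $p_{\sR}$ (making $(p_{\sR}+p_{\sFP})\,\beta_{\sD}$ less negative), so over the admissible range $p \in (0, \tfrac{1}{\min\{W,r\}}]$ the payoff is increasing in $p$; hence the best response saturates the budget at $p = \tfrac{1}{\min\{W,r\}}$.

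The step I expect to be the main obstacle is the passage from the inequality $p \leqslant \tfrac{1}{\min\{W,r\}}$ to equality. This requires (i) certifying that a state with $\min\{W,r\}$ simultaneously occupied min-cut nodes is actually reached with positive probability under equilibrium play, so that the budget constraint genuinely binds rather than being vacuous, and (ii) controlling the competing false-positive term $p_{\sFP}$ and the resource-cost term $\sum p_{\sD}(v_i)\,\C_{\sD}(v_i)$, which both worsen as $p$ grows, to confirm that the detection reward still dominates so the defender does push $p$ to the boundary. I would handle (i) by exhibiting, from the min-cut and attack-path structure, an attack path together with a benign-flow realization of positive probability that co-locates $\min\{W,r\}$ flows on min-cut nodes, and (ii) by appealing to the equilibrium characterization of Theorem~\ref{thm:mincutNE}, under which the per-path detection values $p(\omega)$ and false-positive values $f(\omega)$ are already equalized across $\Omega$, so that the only remaining free quantity is the common $p$ and the defender's payoff is monotone in it.
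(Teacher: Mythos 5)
Your proposal takes essentially the same route as the paper's proof: the counting bound of $\min\{W,r\}$ on how many flows can simultaneously occupy min-cut nodes, the per-state probability normalization forcing $\theta \leqslant 1/\min\{W,r\}$, and saturation of this bound because the defender's payoff is increasing in $\theta$ over the feasible range. The two obstacles you flag are present in the paper's argument as well, and are handled there rather briskly: your point (ii) (that the false-positive term $f(\omega)\beta_{\sD}$ and the resource cost $\sum\theta\,\C_{\sD}(\tilde{v}_i)$ grow with $\theta$) is dispatched by the standing magnitude assumptions $(\alpha_{\sD}-\beta_{\sD}) \gg \beta_{\sD}$, $(\alpha_{\sD}-\beta_{\sD}) \gg 1$, $\beta_{\sD}<0$, while your point (i) (existence of a reachable state where the budget constraint actually binds) is, as in your sketch, asserted rather than proved.
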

\begin{proof}
Consider a non-terminal state ${\bf s} \in {\bf S}$, where ${\bf s}=\{v_{i_1}, \ldots, v_{i_W} \}$. The action set of $\P_{\sD}$ at ${\bf s}$ is $\A_{\sD}({\bf s})= \{0\}\cup \{v_{i_1}, \ldots, v_{i_W} \}$, for$ \{v_{i_1}, \ldots, v_{i_W} \} \subset V_{\G}$. Any defender strategy must satisfy for all ${\bf s} \in {\bf S}$, $p_{\sD}(0)+\sum_{j=1}^Wp_{\sD}(v_{i_j}) =1$ if $v_{i_j} \neq \phi$. By Corollary~\ref{cor:equal}, at NE, $\P_{\sD}$ only selects min-cut nodes as trapping nodes. Let $\hat{\S^\*}= \{\tilde{v}_{1}, \ldots, \tilde{v}_r  \}$ be the solution obtained for the min-cut problem. Thus the constraint on $p_{\sD}$ (that it should add upto 1 at all states) boils down to the following.  For all states ${\bf s}=\{v_{i_1}, \ldots, v_{i_W} \} \subseteq  \{\tilde{v}_{1}, \ldots, \tilde{v}_r\}$,  
\begin{equation}\label{eq:add_1}
p_{\sD}(0)+\sum_{\substack{v_{i_j}: v_{i_j} \in \{\tilde{v}_{1}, \ldots, \tilde{v}_r\}\\j \in \{1, \ldots, W\}}}p_{\sD}(v_{i_j}) =1, 
\end{equation}
 Moreover, we know all min-cut nodes are chosen as trapping nodes with equal probability (Corollary~\ref{cor:equal}), i.e., $p_{\sD}(\tilde{v}_{1})=\ldots=p_{\sD}(\tilde{v}_{r}) := \theta$.  For all states ${\bf s}=\{v_{i_1}, \ldots, v_{i_W} \} \subseteq  \{\tilde{v}_{1}, \ldots, \tilde{v}_r\}$
\begin{equation}\label{eq:add_2}
p_{\sD}(0)+\theta\Big(\sum_{\substack{v_{i_j}: v_{i_j} \in \{\tilde{v}_{1}, \ldots, \tilde{v}_r\}\\j \in \{1, \ldots, W\}}}1\Big) =1, 
\end{equation}
For all states ${\bf s} \in {\bf S}$ with ${\bf s}=\{v_{i_1}, \ldots, v_{i_W} \} \subseteq  \{\tilde{v}_{1}, \ldots, \tilde{v}_r\}$, 
\begin{equation}\label{eq:add_3}
\sum_{\substack{v_{i_j}: v_{i_j} \in \{\tilde{v}_{1}, \ldots, \tilde{v}_r\}\\j \in \{1, \ldots, W\}}}1 \leqslant \min\{W, r\}.
\end{equation}
Eq.~\eqref{eq:add_3} must hold for all states and hence the maximum value that $\theta$ can take is  when   $p_{\sD}(0)=0$ and  $\sum_{\substack{v_{i_j}: v_{i_j} \in \{\tilde{v}_{1}, \ldots, \tilde{v}_r\}\\j \in \{1, \ldots, W\}}}1 = \min\{W, r\}$ at a state ${\bf s} \in {\bf S}$. Thus
$\theta = \dfrac{1}{\min\{W, r\}}.$

As the defender's and adversary's action sets are characterized by the min-cut nodes, the defender's payoff is
{\scalefont{0.95}{
\begin{eqnarray}
U_{\sD}(p_{\sD}, p_{\sA}) &=&\hspace*{-3 mm} \sum_{i=1}^r\sum_{\omega \in \Omega(\tilde{v}_{i})}\hspace*{-2 mm}\pi(\omega)\Big[p(\omega)\,\alpha_{\sD} + (1-p(\omega) +f(\omega))\beta_{\sD} \Big]\nonumber\\
&+& \sum_{i=1}^r\sum_{\omega \in \Omega(\tilde{v}_{i})}\Big(  \sum_{\tilde{v}_i \in \omega}p_{\sD}(\tilde{v}_i)\C_{\sD}(\tilde{v}_i)\Big)\nonumber\\
&=&\hspace*{-3 mm} \sum_{i=1}^r\sum_{\omega \in \Omega(\tilde{v}_{i})}\hspace*{-2 mm}\pi(\omega)\Big[\theta(1-FN)\,(\alpha_{\sD}-\beta_{\sD}) + \beta_{\sD}+ f(\omega)\beta_{\sD} \Big]\nonumber\\
&+& \sum_{i=1}^r\sum_{\omega \in \Omega(\tilde{v}_{i})}\Big(  \sum_{\tilde{v}_i \in \omega}\theta\C_{\sD}(\tilde{v}_i)\Big)\label{eq:def_theta}
\end{eqnarray}
}}
Eq.~\eqref{eq:def_theta} holds from Lemma~\ref{lem:P_d_equal} by substituting $p(\omega)= [1-\prod_{\tilde{v}_i \in \omega(A)} (1-p_{\sD}(\tilde{v}_i))](1-FN)=(1-(1-\theta))(1-FN)=\theta(1-FN)$.   Here $0 \leqslant \theta(1-FN) \leqslant 1$ and $0 \leqslant f(\omega) \leqslant 1$. Also,  $(\alpha_{\sD}-\beta_{\sD})  >> \beta_{\sD}$ with $(\alpha_{\sD}-\beta_{\sD})  >>1$ and $\beta_{\sD} < 0$. Also note that any distribution, $\pi(\omega)$, over paths in $\Omega(\tilde{v}_{i})$, for $i\in\{1,\ldots, r\}$, is optimal for adversary (Corollary~\ref{cor:equal}).  Thus Eq.~\eqref{eq:def_theta} is maximum for the maximum possible $\theta$. Hence at NE the defender selects a set of min-cut nodes as trapping nodes with probability $1/\min\{W, r\}$.
\end{proof}
In this work, we provide a characterization for the equilibrium of the DIFT vs. APT game using the min-cut analysis. Min-cut of a flow-network need not necessarily be unique. For an application whose IFG results in a flow-network with multiple min-cut sets, equilibrium of the DIFT vs. APT game will be nonunique.  The game considers a strategic adversary and the defense strategy computed as equilibrium solution to the game provides a selection of the trapping nodes that maximizes the probability of detection while minimizing the cost of security analysis. During a real attack scenario, the adversary may choose an attack path which is different from the equilibrium of the game. We note that, by  definition of Nash equilibrium, such a strategy will not improve the payoff of the adversary.  Moreover, for a defender that is unaware of the actions of the adversary,  implementing a solution  to the game as the trapping nodes  maximizes the probability of detection.
\section{Numerical Study}\label{sec:sim}
We validate our theoretical results using real-world attack data obtained using RAIN \cite{JiLeeDowWanFazKimOrsLee-17} for a data exfiltration attack. We use  system log data collected during DARPA Transparent computing (DARPA-TC) red team engagement that evaluated the RAIN log recording system. A brief description of the dataset  used and the steps involved in the construction of the IFG for that attack is given below.

The goal of the attack is to exfiltrate sensitive information from the system.
The attack uses stolen credentials to exfiltrate sensitive files from the
targeted machine. The attack exfiltrates sensitive system files using \textit{scp}. The resulting IFG  of the data consists of $299$ nodes and $6398$ edges. Figure~\ref{fig:3} shows a portion of the IFG, obtained after performing a pruning technique, that captures all possible attack paths in the IFG. The details of the type of the nodes in the IFG in Figure~\ref{fig:3} is presented in Table~\ref{tab:IFG}. Entry points of the attack are identified as \textit{/bin/bash} and \textit{/usr/sbin/sshd} (nodes~1 and~5 in Figure~\ref{fig:3}). Destination node of the attack is \textit{/etc/passwd} (node~3 in Figure~\ref{fig:3}).

\begin{figure}[h]
	\centering
	{\includegraphics[width=0.4\textwidth]{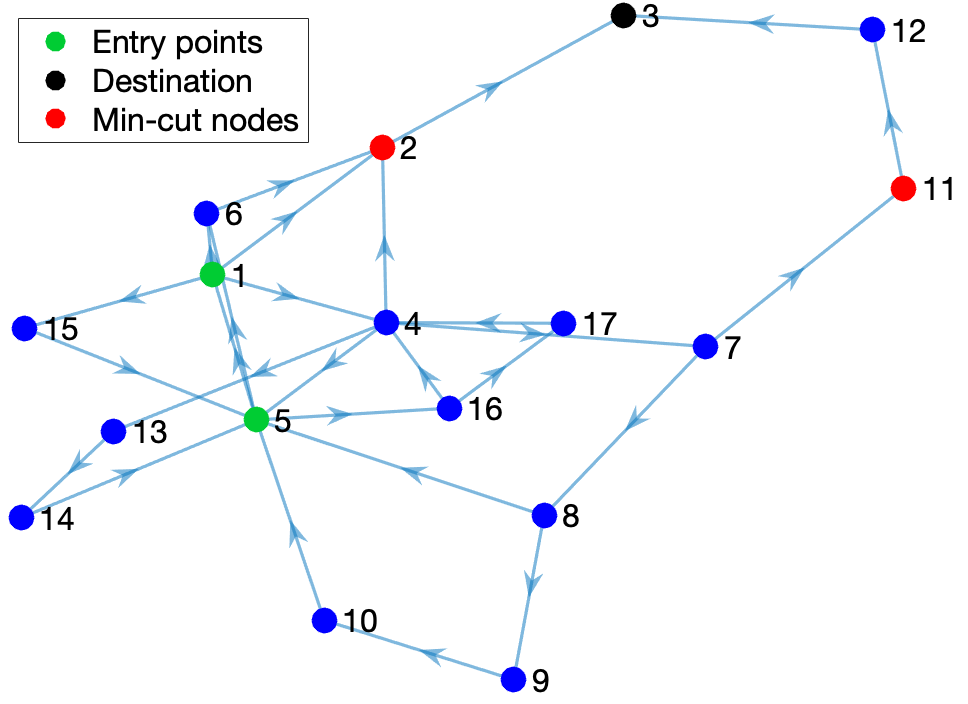}}
	\hspace{\parindent}
	\vspace*{-3 mm}
	\caption{\small Portion of the IFG that captures all  attack paths in IFG.}\label{fig:3}
\end{figure} 

\begin{table}[t]
\begin{minipage}{9cm}
 \centering
\captionof{table}{Description of the nodes of the IFG for the  data exfiltration attack} 
\resizebox{\textwidth}{!}{%
\begin{tabular}{|c|c|c|}
\hline
\multicolumn{1}{|l|} {\bf Node ID} & \multicolumn{1}{|l|}  {\bf Node Name} & \multicolumn{1}{|l|}{\bf Node Type}\\
\hline
\multicolumn{1}{|l|} {1} & \multicolumn{1}{|l|}  {/bin/bash} & \multicolumn{1}{|l|}{Process}\\
\hline
\multicolumn{1}{|l|} {2} & \multicolumn{1}{|l|}  {/usr/bin/sudo} & \multicolumn{1}{|l|}{Process}\\
\hline
\multicolumn{1}{|l|} {3} & \multicolumn{1}{|l|} {/etc/passwd} & \multicolumn{1}{|l|}{Files}\\
\hline	
\multicolumn{1}{|l|} {4} & \multicolumn{1}{|l|} {Unknown} & \multicolumn{1}{|l|}{IPC Object} \\
\hline
\multicolumn{1}{|l|} {5} &\multicolumn{1}{|l|}  {/usr/sbin/sshd} & \multicolumn{1}{|l|}{Process}\\
\hline
\multicolumn{1}{|l|} {6} & \multicolumn{1}{|l|} { /etc/group} & \multicolumn{1}{|l|}{File}\\
\hline
\multicolumn{1}{|l|} {7} & \multicolumn{1}{|l|} { /usr/sbin/console-kit-daemon} & \multicolumn{1}{|l|}{Process}\\
\hline
\multicolumn{1}{|l|} {8} & \multicolumn{1}{|l|}  {Unknown} & \multicolumn{1}{|l|}{IPC Object}\\
\hline
\multicolumn{1}{|l|} {9} & \multicolumn{1}{|l|}  {/usr/sbin/avahi-daemon} & \multicolumn{1}{|l|}{Process}\\
\hline
\multicolumn{1}{|l|} {10} & \multicolumn{1}{|l|} {Unknown} & \multicolumn{1}{|l|}{IPC Object}\\
\hline
\multicolumn{1}{|l|} {11} & \multicolumn{1}{|l|} {/run/ConsoleKit/database$^{\thicksim}$} & \multicolumn{1}{|l|}{File}\\
\hline
\multicolumn{1}{|l|} {12} & \multicolumn{1}{|l|} {/usr/lib/policykit-1/polkitd} & \multicolumn{1}{|l|}{Process}\\
\hline
\multicolumn{1}{|l|} {13} & \multicolumn{1}{|l|} {/bin/run-parts} & \multicolumn{1}{|l|}{File}\\
\hline
\multicolumn{1}{|l|} {14} & \multicolumn{1}{|l|} {/run/motd.new} & \multicolumn{1}{|l|}{File}\\
\hline
\multicolumn{1}{|l|} {15} & \multicolumn{1}{|l|} {/home/theia/secrets.tar.gz} & \multicolumn{1}{|l|}{File}\\
\hline
\multicolumn{1}{|l|} {16} &\multicolumn{1}{|l|}  {/bin/dash} & \multicolumn{1}{|l|}{File}\\
\hline
\multicolumn{1}{|l|} {17} & \multicolumn{1}{|l|} {/usr/bin/apt-config} & \multicolumn{1}{|l|}{File}\\
\hline
\end{tabular} }
\label{tab:IFG}
\end{minipage}
\end{table}

First we solve the min-cut problem on the flow-network constructed from the IFG of the data. Resulting min-cut nodes, \textit{/usr/bin/sudo} and \textit{/run/consolekit/data}, are indicated in red color in Figure~\ref{fig:3}. The min-cut nodes are chosen as trapping nodes with probability $\theta = 1/\min\{W,r\}$. $W$ and $r$ denote the number of tagged flows at a time in the system and the cardinality of the obtained solution to the min-cut problem, respectively. Here $W = 3$ and $r = 2$. Thus $\theta = 0.5$.
  
\begin{figure*}[t]
\centering
\begin{subfigure}[b]{0.41\textwidth}
\begin{tikzpicture}[scale=0.8]
\pgfplotsset{compat=1.11,
    /pgfplots/ybar legend/.style={
    /pgfplots/legend image code/.code={%
       \draw[##1,/tikz/.cd,yshift=-0.25em]
        (0cm,0cm) rectangle (3pt,0.8em);},
   },
}
\begin{axis}[
    ybar,
     x tick label style  = {color=black,text width=1.5cm,align=center, font=\small},
    enlargelimits=0.15,            
    legend style={at={(0.5,1)},
                anchor=north,legend columns=-1},    
    ylabel={Defender's payoff},
    symbolic x coords={1,2,3, 4},
      xticklabels   = {$FP=0$ $FN=0$, $FP=0$ $FN=0.2$, $FP=0.2$ $FN=0.2$, $FP=0.2$ $FN=0$},
    xtick=data,
    nodes near coords,
    every node near coord/.append style={font=\tiny},
    nodes near coords align={vertical},
    ]
\addplot coordinates {(1,-19.1) (2,-232.9) (3,-280.4) (4,-100.7)};
\addplot coordinates {(1,-223.7) (2,-421.7) (3,-439.1)(4,-295)};
\addplot coordinates {(1,-636.7) (2,-718) (3,-676.8)(4,-618.6)};
\legend{{$\theta=0.5$},{$\theta=0.4$},{$\theta=0.2$}}
\end{axis}
\end{tikzpicture}
    \caption{}\label{fig:bar1}
\end{subfigure}~\hspace{1 cm}
\begin{subfigure}[b]{0.41\textwidth}
\centering
\begin{tikzpicture}[scale=0.8]
\pgfplotsset{compat=1.11,
    /pgfplots/ybar legend/.style={
    /pgfplots/legend image code/.code={%
       \draw[##1,/tikz/.cd,yshift=-0.25em]
        (0cm,0cm) rectangle (3pt,0.8em);},
   },
}
\begin{axis}[
    ybar,
     x tick label style  = {color=black,text width=1.5cm,align=center, font=\small},
    enlargelimits=0.15,            
    legend style={at={(0.5,1)},
                anchor=north,legend columns=-1},    
    ylabel={Adversary's payoff},
    symbolic x coords={1,2,3, 4},
      xticklabels   = {$FP=0$ $FN=0$, $FP=0$ $FN=0.2$, $FP=0.2$ $FN=0.2$, $FP=0.2$ $FN=0$},
    xtick=data,
    nodes near coords,
    every node near coord/.append style={font=\tiny},
    nodes near coords align={vertical},
    ]
\addplot coordinates {(1,12) (2,226) (3,274) (4,94)};
\addplot coordinates {(1,218) (2,416) (3,434)(4,290)};
\addplot coordinates {(1,634) (2,716) (3,674)(4,616)};
\legend{{$\theta=0.5$},{$\theta=0.4$},{$\theta=0.2$}}
\end{axis}
\end{tikzpicture}
\caption{}\label{fig:bar2}
\end{subfigure}~\hspace{0.2 cm}
\caption{\small The parameters chosen are: $\alpha_{\sA}=-1000$, $\beta_{\sA}=1000$, $\alpha_{\sD}=1000$, $\beta_{\sA}=-1000$. The resource cost for the nodes are chosen such that $\C_{\sD}(v_i)$ is proportional to the total number of information flows at node $v_i$ for the whole logging period, where $i\in \{1, \ldots, N\}$. Figures~\ref{fig:4}~(a) and~\ref{fig:4}~(b) shows the payoffs of the defender and adversary, respectively, for different values of false-positives ($FP$) and false-negatives ($FN$) when min-cut nodes are chosen as trapping nodes. The probability of selecting a min-cut nodes as trapping node  is varied from $\theta=1/\min\{W,r\}=1/\min\{3,2\}=0.5$, $\theta=0.4$ and $\theta=0.2$ for all the experiments. Also, each case is averaged over $1000$ runs.}\label{fig:4}
\end{figure*}
\begin{figure*}[t]
\centering
\begin{subfigure}[b]{0.41\textwidth}
\begin{tikzpicture}[scale=0.8]
\pgfplotsset{compat=1.11,
    /pgfplots/ybar legend/.style={
    /pgfplots/legend image code/.code={%
       \draw[##1,/tikz/.cd,yshift=-0.25em]
        (0cm,0cm) rectangle (3pt,0.8em);},
   },
}
\begin{axis}[
    ybar,
     x tick label style  = {color=black,text width=1.5cm,align=center, font=\small},
    enlargelimits=0.15,            
    legend style={at={(0.5,1)},
                anchor=north,legend columns=-1},    
    ylabel={Defender's payoff},
    symbolic x coords={1,2,3, 4},
      xticklabels   = {$FP=0$ $FN=0$, $FP=0$ $FN=0.2$, $FP=0.2$ $FN=0.2$, $FP=0.2$ $FN=0$},
    xtick=data,
    nodes near coords,
    every node near coord/.append style={font=\tiny},
    nodes near coords align={vertical},
    ]
\addplot coordinates {(1,-19.1) (2,-232.9) (3,-280.4) (4,-100.7)};
\addplot coordinates {(1,-446.4) (2,-600.8) (3,-984.2)(4,-914.6)};
\addplot coordinates {(1,-1751.1) (2,-1873.1) (3,-1594.5)(4,-1545.4)};
\legend{{Min-cut},{Cut},{Not a cut}}
\end{axis}
\end{tikzpicture}
    \caption{}\label{fig:bar3}
\end{subfigure}~\hspace{1 cm}
\begin{subfigure}[b]{0.41\textwidth}
\centering
\begin{tikzpicture}[scale=0.8]
\pgfplotsset{compat=1.11,
    /pgfplots/ybar legend/.style={
    /pgfplots/legend image code/.code={%
       \draw[##1,/tikz/.cd,yshift=-0.25em]
        (0cm,0cm) rectangle (3pt,0.8em);},
   },
}
\begin{axis}[
    ybar,
     x tick label style  = {color=black,text width=1.5cm,align=center, font=\small},
    enlargelimits=0.15,            
    legend style={at={(0.5,1)},
                anchor=north,legend columns=-1},    
    ylabel={Adversary's payoff},
    symbolic x coords={1,2,3, 4},
      xticklabels   = {$FP=0$ $FN=0$, $FP=0$ $FN=0.2$, $FP=0.2$ $FN=0.2$, $FP=0.2$ $FN=0$},
    xtick=data,
    nodes near coords,
    every node near coord/.append style={font=\tiny},
    nodes near coords align={vertical},
    ]
\addplot coordinates {(1,12) (2,226) (3,274) (4,94)};
\addplot coordinates {(1,84) (2,234) (3,768)(4,698)};
\addplot coordinates {(1,306) (2,424) (3,928)(4,876)};
\legend{{Min-cut},{Cut},{Not a cut}}
\end{axis}
\end{tikzpicture}
\caption{}\label{fig:bar4}
\end{subfigure}~\hspace{0.2 cm}
\caption{\small The parameters chosen are: $\alpha_{\sA}=-1000$, $\beta_{\sA}=1000$, $\alpha_{\sD}=1000$, $\beta_{\sA}=-1000$. The resource cost for the nodes are chosen such that $\C_{\sD}(v_i)$ is proportional to the total number of information flows at node $v_i$ for the whole logging period, where $i\in \{1, \ldots, N\}$. Figures~\ref{fig:5}~(a) and~\ref{fig:5}~(b) shows the payoffs of the defender and adversary, respectively, for different values of false-positives ($FP$) and false-negatives ($FN$) when (i)~min-cut nodes are chosen as trapping nodes, (ii)~cut nodes (not min-cut), and~(iii)~nodes that are not a cut.  Also, each case is averaged over $1000$ runs.}\label{fig:5}
\end{figure*}

Then we simulate an attack in the IFG and perform security analysis to evaluate the performance of the DIFT model. We conduct two experiments which are detailed below. Each payoff value represented in Figure~\ref{fig:4} and Figure~\ref{fig:5} are obtained after averaging over $1000$ trials. We have identified $13$ distinct attack paths related to the adversary in the underlying IFG and uniformly picked one attack path in each trial. Each of these attack paths consist only one min-cut node (Theorem~\ref{thm:mincutNE}). Further, we set $\alpha_{\sA}=-1000$, $\beta_{\sA}=1000$, $\alpha_{\sD}=1000$, and $\beta_{\sA}=-1000$. $\C_{\sD}(v_i)$ values are set to be proportional to the number of information flows passing through each node, $v_i$ for $i \in \{1, \ldots, N\}$, in the IFG through out the logging period. 
\subsection{Case Study 1}\label{study1}
In this experiment setup, we vary the probability of selecting min-cut nodes as trapping nodes. Recall that any probability greater than $\theta$ is not a valid defender's strategy (Theorem~\ref{th:NEmatrix}). Hence for comparing the performance we select probabilities $\theta \leqslant 1/\min\{W, r\} = 0.5$. Note that at NE $\theta = 0.5$. Figures~\ref{fig:4}~(a) and~\ref{fig:4}~(b) plot the payoff values for both players averaged over $1000$ trials for different values of false-positives and false-negatives. Figure~\ref{fig:4}~(a) shows that defender performs better when following NE strategy, i.e., $\theta =0.5$, which validates the theoretical analysis. 

\subsection{Case Study 2}
In this case study we compare the performance of the DIFT by varying the locations of trapping nodes. In provenance-based analysis, various heuristics have been proposed for selecting the trapping nodes  \cite{holmes}, \cite{heuristic-conf}. In \cite{holmes}, a heuristic that assigns weights to the nodes of the IFG is proposed. In this case study, we compare the performance of the defense strategy obtained as solution to the DIFT vs. APT game with two cases. For the first case, we assign nonzero weights to the cut nodes of the IFG, since cut nodes are the smallest set of nodes through which all the flows pass through, and assign zero weights to all non-cut nodes. For the second case, we select the non-cut nodes as the trapping nodes, since all information flows will pass through the non-cut nodes, by assigning nonzero weights to all non-cut nodes and assign zero weights to all cut nodes. 
Figures~\ref{fig:5}~(a) and~\ref{fig:5}~(b) show that the expected payoff of the defender takes higher values when the min-cut nodes are chosen as the trapping nodes. Also note that choosing a cut is better than selecting set of nodes that are not a cut.

\subsection{Case Study 3}
In this case study we compare the performance of the game theoretic approach with a heuristic rule-based method. For comparing the two approaches, we use the knowledge of the actual attack path as the ground truth. The  attack path for the data exfiltration attack is   $1\rightarrow 15\rightarrow 5 \rightarrow 16 \rightarrow 17 \rightarrow 4 \rightarrow 2 \rightarrow 3$ in the IFG given in Figure~\ref{fig:3}.  The heuristic rules used for identifying the trapping nodes are presented in Table~\ref{tab:rules}.   Using these rules, for the data  exfiltration attack, we identified the set of nodes $\{2, 6, 12, 13, 16,  17\}$ as the trapping nodes. Solution to the DIFT vs. APT game returns the min-cut node set $\{2, 11\}$  as the trapping nodes. In this case study, we set $W=1$ and consequently $\theta=1$. The parameters of the game are set as $\alpha_{\sD}=1$, $\beta_{\sD}=-1$, $FN=0.2$, and $FP=0$. 

Resource cost for performing security analysis at a node $v_i$ is given by
\begin{equation}
\C_{\sD}(v_i) = c~q(v_i),
\end{equation}
where $c$ is the cost parameter for performing security analysis and $q(v_i)$ is the fraction of information flows passing through  node $v_i$. The resource cost for conducting security analysis is more for a node with higher fraction of flows, i.e., a busy node. Figure~\ref{fig:heuristic} presents a plot showing the variation of the DIFT's payoff with respect to $c$.  For the data exfiltration attack considered and for the selection of the trapping nodes $\{2, 6, 12, 13, 16,  17\}$, heuristic rule-based method will conduct security analysis on the flow at nodes $2, 16$, and $17$. On the other hand, the game theoretic modeling will conduct security analysis on the flow  at node $2$. Table~\ref{prob} presents the fraction of the information flows passing through nodes $2, 16,$ and $17$. Figure~\ref{fig:heuristic} shows that the hueristic rule-based method performs better for lower values of $c$. This is expected since when cost for conducting security analysis is negligible it is better to analyze the flow at multiple locations, as it increases the probability of detection. As the cost parameter increases,  the game theoretic solution outperforms the heuristic. The performance of the heuristic decreases drastically for higher values of $c$ as heuristic conducts security analysis at node $16$ which is a busier node and hence incurs high resource cost. The game, on the other hand, considers the resource cost while computing the min-cut and hence chooses node $2$ for conducting the security analysis. This validates that the game theoretic approach provides a trade-off between the effectiveness of detection and the cost of detection. 

\begin{table}[h!]
	\centering
		\caption{Fraction of flows traversing nodes in IFG}
	\begin{tabular}{|c|c|c|c|c|c|c|} 
		\hline
		Node $v_i$ & 2 & 16 & 17  \\
		\hline 
		$q(v_i)$ &  0.005705 & 0.167162 & 0.016177  \\ 
		\hline 
	\end{tabular}
	\label{prob}
\end{table}

\begin{table}[t]
\begin{minipage}{9cm}
 \centering
\captionof{table}{Heuristic rules used for identifying trapping nodes in Case study 3} 
\resizebox{\textwidth}{!}{%
\begin{tabular}{|c|c|c|}
\hline
\multicolumn{1}{|l|} {\bf System Call} & \multicolumn{1}{|l|}  {\bf Performed on File} & \multicolumn{1}{|l|}{\bf Description}\\
\hline
\multicolumn{1}{|l|} {read} & \multicolumn{1}{|l|}  {/dev/mem} &  \multirow{1}{*} {Potentially }\\\cline{1-2} 
\multicolumn{1}{|l|} { read} &  \multicolumn{1}{|l|}{/etc/passwd} & anomalous actions \\ \cline{1-2} 
 \multicolumn{1}{|l|} {read} & \multicolumn{1}{|l|}{/etc/shadow} & on sensitive files/folder \\
\hline	
\multirow{1}{*} {} &\multirow{1}{*} {} & \multirow{1}{*} {Potentially} \\ 
\multicolumn{1}{|l|} {read}& \multicolumn{1}{|l|} {*.ssh/id\_rsa}  & {anomalous actions}\\
& & {on sensitive SSH files} \\
\hline
\multicolumn{1}{|l|} {read} &\multicolumn{1}{|l|}  {*.default/key3.db} & \multirow{17}{*} {Potentially } \\ \cline{1-2} 
\multicolumn{1}{|l|} {read} & \multicolumn{1}{|l|} {*.default/logins.json} & \multirow{18}{*} {anomalous actions} \\ \cline{1-2} 
\multicolumn{1}{|l|} {read} & \multicolumn{1}{|l|} {*.default/signons.sqlite} & \multirow{19}{*}{on sensitive SSH files} \\  \cline{1-2} 
\multicolumn{1}{|l|} {write} & \multicolumn{1}{|l|}  {/dev/mem} & \\  \cline{1-2} 
\multicolumn{1}{|l|} {write} & \multicolumn{1}{|l|}  {/etc/*} & \\  \cline{1-2} 
\multicolumn{1}{|l|} {write} & \multicolumn{1}{|l|} {/usr/local/*} & \\  \cline{1-2} 
\multicolumn{1}{|l|} {write} & \multicolumn{1}{|l|} {/usr/local/bin/*} & \\  \cline{1-2} 
\multicolumn{1}{|l|} {write} & \multicolumn{1}{|l|} {/var/spool/cron/crontabs/root} & \\  \cline{1-2} 
\multicolumn{1}{|l|} {write} & \multicolumn{1}{|l|} {/bin/*} & \\  \cline{1-2} 
\multicolumn{1}{|l|} {write} & \multicolumn{1}{|l|} {/boot/*} &\\  \cline{1-2} 
\multicolumn{1}{|l|} {write} & \multicolumn{1}{|l|} {/var/*} &\\  \cline{1-2} 
\multicolumn{1}{|l|} {write} &\multicolumn{1}{|l|}  {/usr/bin/*} & \\  \cline{1-2} 
\multicolumn{1}{|l|} {write} & \multicolumn{1}{|l|} {/dev/*} &\\  \cline{1-2} 
\multicolumn{1}{|l|} {write} & \multicolumn{1}{|l|} {/etc/security/*} &\\  \cline{1-2} 
\multicolumn{1}{|l|} {write} & \multicolumn{1}{|l|} {/usr/spool/*} &\\ \cline{1-2} 
\multicolumn{1}{|l|} {write} & \multicolumn{1}{|l|} {/usr/etc/*} & \\ \cline{1-2} 
\multicolumn{1}{|l|} {write} & \multicolumn{1}{|l|} {/usr/kvm/*} &\\ \cline{1-2} 
\multicolumn{1}{|l|} {write} & \multicolumn{1}{|l|} {/usr/*} &\\ \cline{1-2} 
\multicolumn{1}{|l|} {write} & \multicolumn{1}{|l|} {/usr/lib/*} &\\ \cline{1-2} 
\multicolumn{1}{|l|} {write} & \multicolumn{1}{|l|} {*.default/key3.db} &\\ \cline{1-2} 
\multicolumn{1}{|l|} {write} & \multicolumn{1}{|l|} {*.default/logins.json} &\\ \cline{1-2} 
\multicolumn{1}{|l|} {write} & \multicolumn{1}{|l|} {*.default/signons.sqlite} & \\ \cline{1-2} 
\hline

\multirow{1}{*}   &  \multirow{1}{*}  &  \multirow{1}{*} {Important}\\
\multicolumn{1}{|l|}  {write} & \multicolumn{1}{|l|} {/etc/hosts}  &  {DNS information}\\\cline{1-2}
\hline
\multirow{1}{*}   &  \multirow{1}{*}  &  \multirow{1}{*} {Stores extensions}\\
\multicolumn{1}{|l|}  {write} & \multicolumn{1}{|l|} {*.default/extensions}  &  {for Firefox}\\\cline{1-2}
\hline
\multirow{1}{*}   &  \multirow{1}{*}  &  \multirow{1}{*} {Used for }\\
\multicolumn{1}{|l|}  {open} & \multicolumn{1}{|l|} {*libpam*}  &  {authentication}\\\cline{1-2}
\hline

\multicolumn{1}{|l|} {exec} & \multicolumn{1}{|l|}  {bin/ip} &  \multirow{1}{*} {Used for gathering  }\\\cline{1-2} 
\multicolumn{1}{|l|} { exec} &  \multicolumn{1}{|l|}{/bin/ifconfig} & information about  \\ \cline{1-2} 
 \multicolumn{1}{|l|} {exec} & \multicolumn{1}{|l|}{/bin/netstat} & network \\
\hline	
\multirow{1}{*}   &  \multirow{1}{*}  &  \multirow{1}{*} {Changes permissions}\\
\multicolumn{1}{|l|}  {exec} & \multicolumn{1}{|l|} {/bin/chmod}  &  {of files/folders}\\\cline{1-2}
\hline
\multirow{1}{*}   &  \multirow{1}{*}  &  \multirow{1}{*} {Changes ownership}\\
\multicolumn{1}{|l|}  {exec} & \multicolumn{1}{|l|} {/bin/chown}  &  {of files/folders}\\\cline{1-2}
\hline 
\end{tabular} }
\label{tab:rules}
\end{minipage}
\end{table}
\begin{figure}[t]
\centering
    \begin{tikzpicture}[scale=0.9]
        \centering
      \begin{axis}[
          width=\linewidth, 
          grid=major, 
          grid style={dashed,gray!30}, 
          xlabel={ Resource cost parameter $c$}, 
          ylabel={ DIFT's payoff},
          legend style={at={(0.7,0.99)},anchor=north,legend cell align=left,row sep=-0.1cm,column sep=7pt,font=\scriptsize}, 
              smooth,
             ymax=1
        ]
        \addplot[only marks,color=black,mark=triangle,mark size=1.5pt]  table[x index = {0}, y index = {1},col sep=comma] {heuristic};
        \addlegendentry{Heuristic}
	\addplot[only marks,color=black,mark=*,mark size=1.5pt]  table[x index = {0}, y index = {2},col sep=comma] {heuristic};
        \addlegendentry{Game solution}      
      \end{axis}
    \end{tikzpicture}

    \caption{\small Plot shows variation of DIFT's payoff for  (i)~min-cut solution obtained as equilibrium of the  game and (ii)~heuristic rules-based method,  with respect to  $c$.}
    \label{fig:heuristic}
\end{figure}
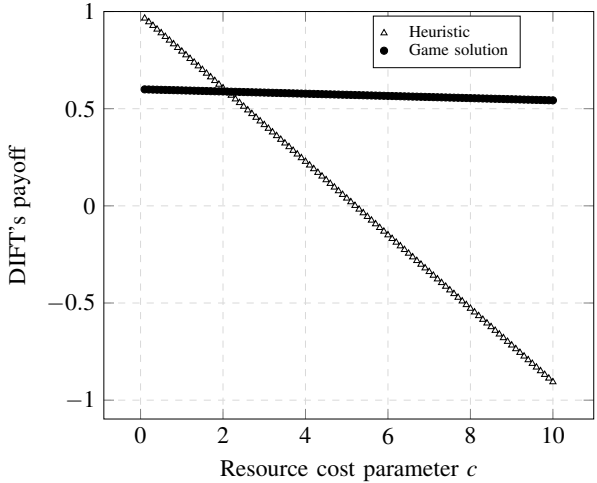

\section{Conclusion}\label{sec:conclu}
This paper provided a game-theoretic model for the detection of Advanced Persistent Threats (APTs) using Dynamic Information Flow Tracking (DIFT). The interaction of  APT and  DIFT is modeled as a stochastic game in which the defender dynamically chooses locations (trapping nodes)  to perform security analysis  and the adversary chooses transitions along attack paths in order to maximize the probability of reaching the desired target. Before performing security analysis, the defender is unable to distinguish whether a tagged information flow at a node in the information flow graph is malicious or not and this results in the information asymmetry of the players. Further, the defender is not accurate in its detection process, which results in the generation of false-positives and false-negatives.  We  modeled the strategic interactions between DIFT and APT  as  a nonzero-sum, turn-based stochastic  game.  The game captures the information asymmetry among the players along with false-positives and false-negatives of DIFT.  Then we provided an approach to compute Nash equilibrium of the game, using a minimum capacity cut-set formulation.  Finally, we implemented our algorithm on real-world data for a data exfiltration attack obtained using the Refinable Attack INvestigation (RAIN) system.  Extending the game model to address a multi-attacker case with multiple malicious flows is a part of future work. 
\bibliographystyle{myIEEEtran}
\bibliography{MURI_References}

\end{document}